\documentclass[12pt,twoside]{article} 
\usepackage[T1]{fontenc} % font

\usepackage{amsthm, amsmath, amssymb, mathrsfs, upgreek, enumerate, mathtools, comment, natbib, graphicx, mathabx, booktabs, threeparttable, relsize, exscale, epstopdf, scrextend, multirow, xr-hyper, pdfpages, bm, caption, subcaption, tikz}
\usepackage[headheight=110pt,margin=1.25in]{geometry}
\usepackage[algoruled]{algorithm2e}
\usepackage[section]{placeins} % for \FloatBarrier command

\usepackage{fancyhdr, setspace} % headers and footers
\usepackage[pagebackref=false]{hyperref} % link colors
\hypersetup{
    colorlinks=true,
    citecolor=blue,
    filecolor=black,
    linkcolor=black,
    urlcolor=blue,
    bookmarksopen=true,
    pdfstartview=FitH
}

\newtheorem{theorem}{Theorem}
\newtheorem{assump}{Assumption}
\newtheorem{seclemma}{Lemma}[section]

\theoremstyle{definition}

\newtheorem{example}{Example}

\newtheorem{remark}{Remark}

\newtheorem{secdefinition}{Definition}[section]

\def\Snospace~{\S{}}
\makeatletter
\def\thm@space@setup{
  \thm@preskip=15pt \thm@postskip=15pt % controls spacing before and
}
\makeatother

\def\indep{\perp\!\!\!\perp}

\newcommand{\cov}{\text{Cov}}
\newcommand{\var}{\text{Var}}
\newcommand{\E}{{\bf E}}
\newcommand{\R}{\mathbb{R}}

\newcommand{\N}{\mathcal{N}}
\newcommand{\X}{\mathcal{X}}
\newcommand{\M}{\mathcal{M}}
\newcommand{\prob}{{\bf P}}
\newcommand{\plimarrow}{\stackrel{p}\longrightarrow}
\newcommand{\dlimarrow}{\stackrel{d}\longrightarrow}
\newcommand{\ind}{\bm{1}}
\providecommand{\abs}[1]{\lvert#1\rvert} 
\providecommand{\norm}[1]{\lVert#1\rVert}

\newcommand*{\medcap}{\mathbin{\scalebox{1.5}{\ensuremath{\cap}}}}

\let\emptyset\varnothing
\providecommand{\abs}[1]{\lvert#1\rvert} 
\providecommand{\norm}[1]{\lVert#1\rVert}

\renewcommand{\qed}{\hfill \mbox{\raggedright \rule{0.08in}{0.08in}}} % black QED box
\renewenvironment{proof}[1][\proofname]{{\noindent\sc#1. }}{\qed\vspace{15pt}} % "proof" small caps

\title{\bf\sc Designing Spatial Treatments} 

\author{Stefan Faridani\thanks{Department of Economics, Georgia Institute of Technology. E-mail: sfaridani6@gatech.edu.} \and Michael P.\ Leung\thanks{Department of Economics, University of California, Santa Cruz. E-mail: leungm@ucsc.edu.}}

\begin{document}
\maketitle
\onehalfspacing
 
\begin{abstract}

  {\sc Abstract.} Spatial treatments are interventions assigned to locations potentially distinct from those of the responding units. We study their optimal design under a general model in which a unit's response diminishes with distance to a treated site. Our estimand of interest is an ``uncontaminated'' effect equal to the average impact of a single intervention site over all hypothetical sites. We propose a novel design based on a Mat\'{e}rn point process which separates treatments by a distance of at least $r$. A larger choice of $r$ reduces bias by separating interventions but increases variance by reducing their numerosity. We choose $r$ to maximize the rate of convergence of a Horvitz-Thompson estimator and prove that this is minimax rate-optimal. We provide weak conditions under which the estimator is asymptotically normal and propose a variance estimator.

  \bigskip

  \noindent {\sc JEL Codes}: C14, C21, C93

  \noindent {\sc Keywords}: causal inference, experimental design, spatial treatments, interference
 
\end{abstract}

%----------------------------------------------------------------------
\section{Introduction}\label{sintro}
%----------------------------------------------------------------------

Place-based interventions have applications across a wide range of disciplines and policy settings. The construction of housing or urban amenities \citep{diamond2019wants}, hot-spot policing \citep{blattman2021place}, and the placement of micromobility vehicles such as e-bikes or scooters are all interventions that can affect units proximate to the site of treatment. Many ecological interventions are also fundamentally place-based, including species removal \citep{camargo2015experimental}, eco-engineering \citep{vozzo2024experiment}, and reserves such as marine protected areas \citep{ban2019well}. Following \cite{pollmann2023causal}, we refer to these interventions as {\em spatial treatments} to distinguish the setting from the typical potential outcomes model in which treatments are assigned to units rather than to sites potentially distinct from unit locations.

\cite{pollmann2023causal} and \cite{wang2023design} develop the first causal frameworks for evaluating the effects of spatial treatments. We build on their contributions by studying optimal design, but there are two key differences between our settings. First, their primary estimand of interest is a weighted average of the effects of multiple intervention sites with weights determined by the likelihoods that sites are treated. We refer to this as a ``contaminated'' effect because a unit may be responding to multiple treatments whose impacts cannot be disentangled. We instead consider an ``uncontaminated'' effect that compares the effect of a {\em single} intervention to the counterfactual of no intervention, averaged with respect to a chosen distribution $F$ over hypothetical intervention sites. The uncontaminated effect is design-independent, neither being a function of the distribution of treatments nor of a set of prespecified treatment locations, which enables the study of optimal design.

Second, both papers assume that the impact of a spatial treatment terminates after a fixed distance from the site. We consider a more general model of spillovers, allowing a unit's response to an intervention to gradually diminish as its distance to the site diverges. This induces a bias-variance trade-off in the placement of interventions. If interventions are well separated, this increases variance since fewer sites can be located in a given region. On the other hand, it reduces bias by better approximating an uncontaminated effect. The design problem is to locate interventions at the optimal level of separation to balance bias and variance, subject to locating more interventions in regions of higher probability under $F$.

We propose a novel design that generates intervention sites according to a Mat\'{e}rn type II point process. This entails drawing potential sites from $F$, dropping sites to ensure that the remainder are separated by a distance of at least $r$, and randomizing the remainder to treatment and control. The distance $r$ controls the bias-variance trade-off, and we choose it to maximize the rate of convergence of a Horvitz-Thompson estimator. We prove that, for any design and linear estimator, this rate is minimax-optimal over all potential outcomes. 

Under weak conditions, we establish that the Horvitz-Thompson estimator is asymptotically normal. Due to the Mat\'{e}rn-type design we employ, the data have a local dependence structure in that observations are independent if they are separated by a distance of $2r$ or greater. We may then apply a central limit theorem for local dependence, which is widely utilized in the interference literature \citep[e.g.][]{ogburn2024causal}. Unfortunately, the sufficient conditions are far too restrictive for our setting, delivering a suboptimal rate of convergence. We obtain optimal rates by using bounds on the Wasserstein distance tighter than the standard result. The bounds are more tedious to use compared to the off-the-shelf result, requiring computations involving fourth-order cross-moments.

\cite{faridani2023rate}, \cite{leung2022rate}, and \cite{leung2025cluster} study optimal designs for spatial interference, as opposed to spatial treatments. In their setting, interventions are assigned to units rather than to distinct spatial locations. Accordingly, they consider a different estimand, and the optimal design involves cluster-randomization rather than Mat\'{e}rn thinning. \cite{faridani2023rate} establish minimax rate-optimality. We employ similar techniques in the proof of our minimax result, but aspects of their argument require substantial modification. A final difference is the central limit theory, which is substantially different under spatial interference because the dependence structure is a hybrid of cluster- and near-epoch dependence, as opposed to local dependence in our setting. Thus, the normal approximation established in this paper relies on an entirely different set of techniques.

A related literature on bipartite interference also studies treatments assigned to intervention units distinct from response units \citep[e.g.][]{lu2025design,zigler2021bipartite,zigler2025bipartite}. Units are connected by a bipartite graph, and the edges determine which intervention units impact which response units. This may be viewed as a discrete analog of the continuous spatial treatments framework.

%----------------------------------------------------------------------
\section{Setup}\label{smodel}
%----------------------------------------------------------------------

Spatial treatments are randomly assigned to a finite number of sites, which are points in a compact set $\mathcal{R} \subseteq \R^d$. We denote the set of realized treatments by $\bm{S}$, which is observed by the analyst. Units are distributed in $\R^d$ according to an observed measure $\eta$. To represent unit responses to treatments, we adopt potential outcomes notation due to \cite{pollmann2023causal}. Let $\mathcal{P}$ be the set of all finite subsets of $\R^d$, and associate with each $x \in \R^d$ a (non-stochastic) potential outcome function $Y_x\colon \mathcal{P} \rightarrow \R$. For a unit positioned at $x \in \R^d$, $Y_x(S)$ denotes its potential outcome under the counterfactual that spatial treatments are given by $S \subseteq \mathcal{R}$ with only $Y_x(\bm{S})$ observed.

We suppose that a unit's response to a spatial treatment is diminishing in distance to the site. Let $B(x,r) = \{y \in \R^d\colon \norm{x-y} \leq r\}$, the ball of radius $r$ centered at $x$.

\begin{assump}[Spillovers]\label{aani}
  There exist $c,\gamma>0$ such that for all $s\in\mathbb{R}_+$,
  \begin{equation*}
    \sup_{x\in\R^d} \max\big\{\abs{Y_x(S) - Y_x(S')}\colon S,S' \subseteq \R^d \text{ finite}, S \cap B(x,s) = S' \cap B(x,s) \big\} \leq c\,s^{-\gamma}. 
  \end{equation*}
\end{assump}

\noindent That is, $Y_x(S)$ primarily depends on the subset of treatments relatively close to $x$. If $\gamma$ is small, then spillovers decay slowly with distance, and treatments can potentially influence more distant units. Assumption 3 of \cite{pollmann2023causal} and Assumption C3 of \cite{wang2023design} are stronger versions of \autoref{aani} that amount to replacing $c\,s^{-\gamma}$ with $c\, \ind\{s \leq d_0\}$. This means treatments do not impact units beyond a fixed radius $d_0$.

\autoref{aani} is an adaptation of the spatial interference model in Assumption 3 of \cite{leung2022rate} to the spatial treatments setup. As in the spatial interference literature, the coefficient $\gamma$ will play a central role in the optimal design because it provides a worst-case lower bound on the rate of decay. 

%--------------------------------------
\subsection{Causal Estimand}
%--------------------------------------

Let $\norm{\cdot}$ denote the Euclidean norm on $\R^d$, and let $\Delta \subseteq \R$ be an interval. For any $s \in \mathcal{R}$ and $S \subseteq \mathcal{R}$, define as in \cite{wang2023design}
\begin{equation*}
  \mu(s, S) = \frac{\int_x Y_x(S) \bm{1}\{\norm{x-s} \in \Delta\} \,\text{d}\eta}{\int_x \bm{1}\{\norm{x-s} \in \Delta\} \,\text{d}\eta}
\end{equation*}

\noindent with $\frac{0}{0} \equiv 0$. This is the average potential outcome of units within an annulus around $s$ defined by $\Delta$, under the counterfactual that the set of spatial treatments is $S$. It is possible to consider other functionals besides the average, but we focus our exposition on $\mu(s,S)$.

Define the site-specific causal effect
\begin{equation*}
  \tau(s) = \mu(s, \{s\}) - \mu(s, \emptyset) = \frac{\int_x \big( Y_x(\{s\}) - Y_x(\emptyset) \big) \bm{1}\{\norm{x-s} \in \Delta\} \,\text{d}\eta}{\int_x \bm{1}\{\norm{x-s} \in \Delta\} \,\text{d}\eta}.
\end{equation*}

\noindent This is the effect of a single treatment $s$ on units within the annulus around $s$. To aggregate up to an average effect over hypothetical locations $s$, the analyst specifies a continuous distribution $F$ supported on $\mathcal{R}$. The estimand of interest is
\begin{equation*}
  \theta_0 = \int_x \tau(x) \,\text{d}F,
\end{equation*}

\noindent which weights the site-specific effects $\{\tau(x)\}_{x\in\mathcal{R}}$ according to the scientific priorities of the analyst. For instance, if $\mathcal{R}$ contains urban and rural areas and only urban areas are of interest, then $F$ assigns zero probability mass to rural areas.

The primitive causal contrast used to construct $\theta_0$ is the ``uncontaminated'' effect $Y_x(\{s\}) - Y_x(\emptyset)$, so called because it is the causal effect of a single treatment. An analogous effect is defined in equation (1) of \cite{pollmann2023causal}, but this is not the focus of his paper. The primary estimands studied by \cite{pollmann2023causal} and \cite{wang2023design} instead aggregate comparisons of the form $\E[Y_x(\bm{S} \cup \{s\}) - Y_x(\bm{S} \backslash \{s\})]$. We call these ``contaminated'' in the sense that they involve multiple spatial treatments. Furthermore, the contaminated comparisons are design-dependent due to the expectation over $\bm{S}$, which precludes a study of optimal design. 

Both papers assume that $\bm{S}$ is supported on a fixed, prespecified set of sites which are the only locations that can be treated. To define our estimand, the analyst does not prespecify exact sites, which would correspond to $F$ with discrete support, but rather broader regions of interest according to $F$ with continuous support. The design challenge is to treat regions with higher weight under $F$ while also minimizing contamination due to sites placed in close proximity.

%--------------------------------------
\subsection{Design}
%--------------------------------------

In order to identify an uncontaminated effect, we seek to randomize treatment to locations that are sufficiently geographically separated. We employ the following Mat\'{e}rn type II ``thinning'' process. Let $\X_n = \{X_i\}_{i=1}^n \stackrel{iid}\sim F$ be a set of $n$ {\em proto-sites} and $\{U_i\}_{i=1}^n \stackrel{iid}\sim \mathcal{U}([0,1])$ be {\em marks} associated the proto-sites, generated independently of $\X_n$. Let
\begin{equation*}
  \M_r = \left\{ i \in \{1,\ldots,n\}\colon U_i \leq U_j \,\,\forall j \text{ s.t. } X_j \in B(X_i,r) \right\}.
\end{equation*}

\noindent That is, we obtain $\M_r$ from $\X_n$ by dropping sites with a higher mark than any other site in its $r$-ball. Then each pair of elements of $\M_r$ is separated by a distance of at least $r$.

Call site $i$ a {\em potential site} if $i \in \M_r$. We randomize potential sites into treatment with probability $p \in (0,1)$. Draw i.i.d.\ treatment assignments $\{D_i\}_{i=1}^n \stackrel{iid}\sim \text{Ber}(p)$ independently of $\{X_i\}_{i=1}^n$. The set of spatial treatments consists of potential sites that are randomized to treatment:
\begin{equation*}
  \bm{S} = \big\{X_i\colon i \in \M_r, D_i=1\big\}.
\end{equation*}

%--------------------------------------
\subsection{Estimators}
%--------------------------------------

Let $\rho_i = \prob(i \in \M_r \mid \X_n)$. By design, any proto-site in the $r$-ball of $i$ is equally likely to have the lowest mark, so $\rho_i = \abs{B(X_i,r) \cap \X_n}^{-1}$. We estimate $\theta_0$ using the Horvitz-Thompson estimator 
\begin{equation*}
  \hat\theta = \frac{1}{n} \sum_{i=1}^n \mu(X_i,\bm{S}) \frac{\bm{1}\{i \in \M_r\}}{\rho_i} \left( \frac{D_i}{p} - \frac{1-D_i}{1-p} \right),
\end{equation*}

\noindent which simply compares treated and untreated potential sites. Inverse probability weighting by $\rho_i$ adjusts for the possibility that certain proto-sites are more likely to be selected as potential sites due to a paucity of neighboring sites.

For a given number of proto-sites $n$, larger $r$ results in fewer potential sites, which results in a smaller effective sample size $\abs{\M_r}$ and hence larger variance. On the other hand, larger $r$ generates more separation between potential sites, which should aid in identifying the uncontaminated effect. Hence, the choice of $r$ determines a bias-variance trade-off.

The summands of $\hat\theta$ are not independent due to spillovers in $\mu(X_i,\bm{S})$ and correlation across the indicators $\bm{1}\{i \in \M_r\}$. Under the ``undersmoothed design'' discussed in \autoref{spi}, the correlation in the indicators is first-order. These have a local dependence structure in that $\bm{1}\{i \in \M_r\} \indep \bm{1}\{j \in \M_r\} \mid \X_n$ if $\norm{X_i-X_j} > 2r$. We therefore consider the natural variance estimator that accounts for covariances between sites up to distance $2r$:
\begin{multline*}
  \hat\sigma^2 = \frac{1}{n^2} \sum_{i=1}^n \sum_{j=1}^n \ind\{ \norm{X_i-X_j} \leq 2r \} (Z_i - \hat\theta) (Z_j - \hat\theta), \\ \text{where}\quad Z_i = \mu(X_i,\bm{S}) \frac{\bm{1}\{i \in \M_r\}}{\rho_i} \left( \frac{D_i}{p} - \frac{1-D_i}{1-p} \right).
\end{multline*}

%----------------------------------------------------------------------
\section{Asymptotic Theory}\label{sclt}
%----------------------------------------------------------------------

We study the asymptotic properties of the estimators by taking limits along a sequence of spatial regions such that $\text{Vol}(\mathcal{R})$ diverges with the number of proto-sites $n$.\footnote{For any $A \subseteq \R^d$, $\text{Vol}(A) = \int_{x \in A} \,\text{d}\mu$ where $\mu$ is the Lebesgue measure.} This is intended to approximate a relatively large spatial region, which in practice is necessary to construct sufficiently separated spatial treatments. If a region is small in the sense that its volume is fixed with respect to $n$, then every unit lies within a fixed distance of every treatment, so asymptotically negligible bias is only achieved in the case of no spillovers beyond a sufficiently short distance. 

Let $\mathcal{R}_0$ be a compact subset of $\R^d$ and $\{\lambda_n\}$ be any sequence of positive numbers such that
\begin{equation*}
  \lambda_n = O(n^{1/d}) \quad\text{and}\quad \lambda_n \rightarrow \infty.
\end{equation*}

\noindent We take 
\begin{equation*}
  \mathcal{R} = \big\{\lambda_n x\colon x \in \mathcal{R}_0\big\}.
\end{equation*}

\noindent Then the volume of $\mathcal{R}$ is order $\lambda_n^d$, and slow rates of divergence for $\lambda_n$ correspond to smaller spatial regions. Note that $\lambda_n$ may diverge at an arbitrarily slow rate.

Let $f$ be a density function and $\{\tilde X_i\}_{i=1}^n \stackrel{iid}\sim f$. For any $n\in\mathbb{N}$, we take proto-sites to be
\begin{equation*}
  X_i = \lambda_n \tilde X_i,
\end{equation*}

\noindent so that $F$ has density $\lambda_n^{-d}f(\lambda_n^{-1}x)$. In special case where $\lambda_n$ is exactly of order $n^{1/d}$, this corresponds to an increasing domain setup since the proportion of proto-sites lying within a fixed ball about $X_i$ is asymptotically constant. When $\lambda_n$ diverges more slowly, this corresponds to infill-increasing asymptotics where the number within a fixed ball diverges with $n$.

We set the radius of the Mat\'{e}rn thinning design to be
\begin{equation*}
  r \equiv r_n = \beta_n \lambda_n \quad\text{where}\quad \beta_n \rightarrow 0 \quad\text{and}\quad \lambda_n^{-1} = o(\beta_n). 
\end{equation*}

\noindent The rates imply $r_n \rightarrow \infty$, which is needed to ensure that the bias of $\hat\theta$ is asymptotically negligible. Since $\lambda_n = O(n^{1/d})$, they also imply $n\beta_n^d \rightarrow \infty$.

To summarize, we take limits along a sequence such that $r$, $\mathcal{R}$, and $F$ change with $n$ under the aforementioned rate restrictions. We also allow potential outcomes to change with $n$, subject to satisfying \autoref{aani} {\em with $c$ and $\gamma$ being universal constants that do not depend on $n$}. All other quantities are held fixed. We also maintain the following regularity conditions. 

\begin{assump}[Bounded Response]\label{aboundY}
  There exists $c_0>0$ such that $\int_x \bm{1}\{\norm{x-s} \in \Delta\} \,\text{d}\eta > c_0$ for all $s \in \R^d$. There exists $M < \infty$ such that $\int_{x \in B} \abs{Y_x(S)} \,\text{d}\eta < M$ for any bounded $B \subseteq \R^d$, finite $S \subseteq \mathcal{R}$, and $n\in\mathbb{N}$.
\end{assump}

\noindent This implies that $\mu(s,S)$ is bounded uniformly over the sequence.

\begin{assump}[Density]\label{adensity}
  On its support, $f$ is continuous, bounded, and bounded away from zero. Furthermore, its support is bounded with nontrivial volume and a Lipschitz boundary.
\end{assump}

\noindent A Lipschitz boundary (see \autoref{dlipbd}) is a mild regularity condition requiring that local segments of the boundary correspond to graphs of Lipschitz functions. This allows for a nonsmooth boundary.

%--------------------------------------
\subsection{Normal Approximation}
%--------------------------------------

Define the infeasible estimator
\begin{equation*}
  \hat\theta^* = \frac{1}{n} \sum_{i=1}^n \frac{\bm{1}\{i \in \M_{r_n}\}}{\rho_i} \left( \frac{\mu(X_i,\{X_i\})D_i}{p} - \frac{\mu(X_i,\{\emptyset\})(1-D_i)}{1-p} \right),
\end{equation*}

\noindent which is unbiased for $\theta_0$. Decompose
\begin{equation}
  \hat\theta - \theta_0 = \big(\hat\theta - \hat\theta^*\big) + \big(\hat\theta^* - \E[\hat\theta^* \mid \X_n]\big) + \big(\E[\hat\theta^* \mid \X_n] - \theta_0\big), \label{decomp}
\end{equation}

\noindent and note that $\E[\hat\theta^* \mid \X_n] = n^{-1} \sum_{i=1}^n \tau(X_i)$. The first term of \eqref{decomp} is the bias term, which will be shown to be $O_p(r_n^{-\gamma})$. The third term is mean zero with variance $n^{-1}\var(\tau(X_1))$, which is second-order relative to the other terms. It remains to derive the asymptotic distribution of the second term, which is an average of variables
\begin{multline}
  Z_i^* = \frac{\bm{1}\{i \in \M_{r_n}\}}{\rho_i} \left( \frac{\mu(X_i,\{X_i\})D_i}{p} - \frac{\mu(X_i,\{\emptyset\})(1-D_i)}{1-p} \right) \\ - \big( \mu(X_i,\{X_i\}) - \mu(X_i,\emptyset) \big). \label{Zistar}
\end{multline}

\noindent Let $\sigma_n^2 = \var(\hat\theta^* \mid \X_n)$, the conditional variance of the second term. We assume it is asymptotically nondegenerate in the following sense.

\begin{assump}[Non-degeneracy]\label{anondeg}
  $\liminf_{n\rightarrow\infty} \beta_n^{-d} \sigma_n^2 > 0$ a.s.
\end{assump}

\noindent The normalization by $\beta_n^{-d}$ is natural because $n^{-2} \sum_{i=1}^n \var(Z_i^* \mid \X_n) = O(\beta_n^d)$ by \autoref{lovar}. To violate \autoref{anondeg}, the sum of the covariance terms needs to be negative and large enough in magnitude to offset the sum of the variances. Later we will derive a closed-form expression for the limit variance under this scaling.

\begin{theorem}\label{tclt}
  Under Assumptions \ref{aani} and \ref{aboundY}, $\abs{\hat\theta - \hat\theta^*} = O_p(r_n^{-\gamma})$. Additionally, under Assumptions \ref{adensity} and \ref{anondeg}, $(\hat\theta^* - \theta_0) / \sigma_n \dlimarrow \N(0,1)$, so $\abs{\hat\theta^* - \theta_0} = O_p(\beta_n^{d/2})$.
\end{theorem}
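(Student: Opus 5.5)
The plan has three parts: a deterministic bias bound, a normal approximation for the conditionally centered infeasible estimator, and a check that the remaining term in \eqref{decomp} is second order.

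\textbf{Bias term.} For $i\in\M_{r_n}$ every other site in $\bm{S}$ lies at distance at least $r_n$ from $X_i$. Hence $\bm{S}\cap B(X_i,s)$ equals $\{X_i\}$ if $D_i=1$ and $\emptyset$ if $D_i=0$, for any $s<r_n$. Under \autoref{aani}, $\abs{Y_x(\bm{S})-Y_x(\{X_i\}D_i)}\le c\,(r_n-\norm{x-X_i})^{-\gamma}$. Restrict to units in the annulus $\norm{x-X_i}\in\Delta$ (we take $\Delta$ bounded) and average using \autoref{aboundY}, whose denominator is bounded below by $c_0$. This gives $\abs{\mu(X_i,\bm{S})-\mu(X_i,\{X_i\}D_i)}\le C r_n^{-\gamma}$ for large $n$. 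Therefore
\[
\abs{\hat\theta-\hat\theta^*}\le C r_n^{-\gamma}\,\frac{1}{n}\sum_i \frac{\bm{1}\{i\in\M_{r_n}\}}{\rho_i}\max\{p^{-1},(1-p)^{-1}\}.
\]
Since $\E[\bm{1}\{i\in\M_{r_n}\}/\rho_i\mid\X_n]=1$, the average is $O_p(1)$ by Markov's inequality, which proves the first claim.

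\textbf{Normal approximation.} Write $\hat\theta^*-\theta_0=n^{-1}\sum_i Z_i^*+\big(n^{-1}\sum_i\tau(X_i)-\theta_0\big)$. The second piece is $O_p(n^{-1/2})$. Because $n\beta_n^d\to\infty$, it is $o_p(\beta_n^{d/2})=o_p(\sigma_n)$ by \autoref{anondeg}. For the first piece I work conditionally on $\X_n$. Given $\X_n$, the $Z_i^*$ are mean zero, and $Z_i^*$ depends only on $D_i$ and on the marks $U_j$ for $X_j\in B(X_i,r_n)$. So $Z_i^*\indep Z_j^*$ whenever $\norm{X_i-X_j}>2r_n$, giving a dependency graph with neighborhoods $N_i=\{j:\norm{X_i-X_j}\le 2r_n\}$.

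The naive Stein bound from local dependence involves $\max_i\abs{N_i}$ together with a crude $\abs{Z_i^*}\lesssim\rho_i^{-1}\asymp n\beta_n^d$. That route loses the optimal rate, so I will use the sharper Wasserstein bound for local dependence (as in Ross's survey, Theorem 3.5 in its moment form). Set $W=\sum_i Z_i^*/(n\sigma_n)$. The bound is controlled by two quantities:
\[
\sum_i\E\Big|Z_i^*\Big(\sum_{j\in N_i}Z_j^*\Big)^2\Big|\quad\text{and}\quad \var\Big(\sum_i\sum_{j\in N_i}Z_i^*Z_j^*\Big).
\]
I then bound these via fourth-order cross-moments $\E[Z_i^*Z_j^*Z_k^*Z_l^*\mid\X_n]$, exploiting three facts:
\begin{enumerate}[(i)]
\item $\bm{1}\{i\in\M_{r_n}\}/\rho_i$ has conditional $m$-th moment of order $\rho_i^{1-m}$, so it is large only with small probability.
\item Under \autoref{adensity}, $\abs{B(X_i,r_n)\cap\X_n}\asymp n\beta_n^d$ uniformly with high probability. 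This is a uniform law over balls, with the Lipschitz boundary keeping boundary balls at a nontrivial fraction of their volume.
\item Two sites within distance $r_n$ of each other cannot both be in $\M_{r_n}$, which kills many diagonal cross-terms.
\end{enumerate}
Counting index tuples with $\abs{N_i}\asymp n\beta_n^d$, the goal is to show that both terms divided by the appropriate powers of $n\sigma_n$ are $O((n\beta_n^d)^{-1/2})\to0$ a.s., or at least in probability. Wasserstein convergence conditional on $\X_n$ then gives the unconditional CLT by dominated convergence of characteristic functions.

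\textbf{Main obstacle.} I expect the fourth-moment bookkeeping in the variance-of-quadratic-form term to be the hard step. One has to split the tuples $(i,j,k,l)$ according to which are within $2r_n$, or within $4r_n$ for the covariance of two local products to be nonzero. Each case must be bounded using the thinning structure, rather than crude bounds, to avoid losing powers of $n\beta_n^d$. The final claim $\abs{\hat\theta^*-\theta_0}=O_p(\beta_n^{d/2})$ then follows from $\sigma_n^2=O(\beta_n^d)$. That rate bound comes from \autoref{lovar} combined with a Cauchy--Schwarz count of the covariance terms over $\abs{N_i}$ neighbors, where (iii) shows each such term is bounded by $O(1)$ times the product of the inclusion probabilities.
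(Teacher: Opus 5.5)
Your architecture matches the paper's. You use the bias bound, condition on $\X_n$ with a $2r_n$ dependency graph, apply the intermediate Wasserstein bound from the proof of Ross's local-dependence theorem (essentially Lemma~\ref{lstein}), and run a fourth-order case analysis. Your bias step via Markov's inequality is simpler than the paper's variance computation. You also correctly take the $O_p(\beta_n^{d/2})$ rate from the upper bound in Lemma~\ref{lovar}; Assumption~\ref{anondeg} is only a lower bound.

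The gap is in the normal approximation. Facts (i)--(iii) do not control mixed moments of \emph{distinct} sites at pairwise distances in $(r_n,2r_n]$. Their inclusion indicators are dependent through the marks in the overlapping lenses. With only (i) and H\"older, you get $\E[\abs{Z_i^*Z_j^*Z_k^*}\mid\X_n]=O((n\beta_n^d)^2)$ instead of $O(1)$. Summed over the $\asymp n(n\beta_n^d)^2$ triples with $A_{ij}A_{ik}=1$, the first term of \eqref{wass2} is then only $O(n^2\beta_n^{5d/2})$, which need not vanish. Quadruples in the variance term suffer the same loss.

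The missing ingredient is Lemma~\ref{lprod}: $\prod_{s\in S}N_s\,\E[\prod_{s\in S}M_s\mid\X_n]=O(1)$ uniformly over $\abs{S}\le 4$. Your closing sentence credits (iii) with this, but (iii) only makes the product vanish for pairs within $r_n$. The paper handles pairs with the exact formula \eqref{Se2}, and triples and quadruples with counts of points exclusive to each ball together with Lemma~\ref{lNm}.

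A shorter argument needs only your fact (ii). If the sites in $S$ are pairwise more than $r_n$ apart, the balls $B(X_s,r_n/2)$ are disjoint. $M_s=1$ forces $U_s$ to be the smallest mark in $B(X_s,r_n/2)$, and these events are conditionally independent given $\X_n$. So the product is at most $\prod_{s\in S}N(s,r_n)/N(s,r_n/2)=O(1)$. This also avoids the exclusive-region counts, which can degenerate for $\abs{S}=4$. For example, with $d=2$, three neighbours just beyond distance $r_n$ at mutual angles of $120^\circ$ have $r_n$-balls that nearly cover $B(X_s,r_n)$.

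Second, your target $O((n\beta_n^d)^{-1/2})$ is not what this bound delivers. The $i=j=k$ piece of the first term is already $n(n\beta_n^d)^2/(n^2\beta_n^d)^{3/2}=\beta_n^{d/2}$. The $i=j=k=\ell$ piece of the second term is $\{n(n\beta_n^d)^3\}^{1/2}/(n^2\beta_n^d)=\beta_n^{d/2}$. When $\tau\equiv0$ the first term really is this large, since then $\E[\abs{Z_i^*}(\sum_j A_{ij}Z_j^*)^2\mid\X_n]\ge\E[\abs{Z_i^*}^3\mid\X_n]$. The approximation is driven by the $\asymp\beta_n^{-d}$ potential sites. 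So the bound vanishes because $\beta_n\to0$, not because $n\beta_n^d\to\infty$, and your stated goal fails whenever $n\beta_n^{2d}\to\infty$. Aim for $O(\beta_n^{d/2})$, as the paper does; that suffices.
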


\noindent The first claim bounds the asymptotic order of the bias term, showing that it decreases with the Mat\'{e}rn radius $r_n$. The second shows that the remainder is asymptotically normal, centered at $\theta_0$, and converges at rate $\beta_n^{d/2}$. Since $r_n = \beta_n \lambda_n$, $\beta_n$ determines a bias-variance trade-off. We discuss the rate-optimal choice in the next subsection.

To prove the second claim, we observe that the ``data'' $\{Z_i^*\}_{i=1}^n$ have a local dependence structure due to the indicators $\ind\{i \in \M_{r_n}\}$: conditional on $\X_n$, $Z_i^* \indep Z_j^*$ if $\norm{X_i-X_j} > 2r_n$. We may then apply the local dependence central limit theorem in \autoref{lstein}. This is not the usual result employed in the literature, as it uses tighter bounds on the Wasserstein distance. These bounds are obtained from an intermediate step in the proof of the usual result and are not typically used directly since they are inconvenient, and the looser bounds suffice for most applications. In our setting, however, the looser bounds result in suboptimal rates.

\begin{remark}
  Standard formulations of the local dependence central limit theorem, for example \cite{ross2011fundamentals} Theorem 3.6 and \cite{penrose2003} Theorem 2.4, require an asymptotic bound on $\Psi^2 (n\sigma_n)^{-3} \sum_{i=1} \E[\abs{Z_i^*}^3 \mid \X_n]$ where $\Psi = \max_i \sum_{j=1}^n \ind\{\norm{X_i-X_j} \leq 2r_n\}$ is the maximum degree of the dependency graph. This is significantly more conservative than the bound in \autoref{lstein}. To see this, note that $\Psi = O_p(n\beta_n^d)$ by \autoref{ldt}, $\liminf_n \beta_n^{d/2}\sigma_n > 0$ by \autoref{anondeg}, and $\E[\abs{Z_i^*}^3 \mid \X_n] = O_p( (n\beta_n^d)^2 )$ as shown prior to \eqref{Norder}. Therefore,
  \begin{equation*}
    \frac{\Psi^2}{(n\sigma_n)^3} \sum_{i=1} \E[\abs{Z_i^*}^3 \mid \X_n] = O_p\left( \frac{(n\beta_n^d)^2}{(n\beta_n^{d/2})^3} \right) \cdot n \cdot O_p(  (n\beta_n^d)^2 ) = O_p(n^2\beta_n^{5d/2}).
  \end{equation*}

  \noindent Whereas \autoref{tclt} only requires $\beta_n \rightarrow 0$, the above expression is only $o_p(1)$ if $\beta_n = o(n^{-4/(5d)})$. In order for the rate-optimal choice of $\beta_n$ (discussed in the next subsection) to satisfy this requirement in the increasing domain case of $\lambda_n = n^{1/d}$, we need $\gamma > 2d$, whereas \autoref{tclt} holds for any $\gamma > 0$.
  % Recall that $\beta_n = n^{-1/(d(1+d/\gamma/2)}$ is the optimal rate, so we need
  % 1/(d(1+d/\gamma/2) > 4/(5d) \Longleftrightarrow 5/(1+d/\gamma/2) > 4 \Longleftrightarrow 5/4 > 1+d/\gamma/2 \\
  % 0.25 > d/\gamma/2 \Longleftrightarrow 0.5 > d/\gamma \Longleftrightarrow \gamma > 2d.
\end{remark}

%--------------------------------------
\subsection{Minimax Rate}\label{sminmax}
%--------------------------------------

By \autoref{tclt}, the squared bias is order $r_n^{-2\gamma}$ while the variance is order $\beta_n^d$. Equating the two and solving for $\beta_n$, we obtain the optimal rate
\begin{equation}
  \beta_n = \lambda_n^{-\frac{1}{1+0.5d/\gamma}}. \label{optr}
\end{equation}

\noindent Under \eqref{optr}, the rate of convergence of $\hat\theta$ is $\lambda_n^{-\frac{d}{2 + d/\gamma}}$, which we will show is minimax. In the increasing domain case where $\lambda_n = n^{1/d}$, this rate equals $n^{-\frac{1}{2 + d/\gamma}}$, which matches the minimax rate in \cite{faridani2023rate}. In this sense, rate-optimal design of spatial treatments is as difficult a problem as rate-optimal design for spatial interference. In the increasing domain case, $\hat\theta$'s rate of convergence is always below the parametric rate but approaches $n^{-1/2}$ as $\gamma$ increases, meaning as spillovers become less persistent.

\begin{remark}\label{r.n}
  An interesting feature of the rates is that they depend directly on $\lambda_n^d$, the asymptotic order of $\text{Vol}(\mathcal{R})$, but not on $n$. Thus for a given spatial region, an unlimited budget for experimentation to increase the number of proto-sites does not translate to improved power. This is because the Mat\'{e}rn thinning design discards excess proto-sites until the remainder is well separated, so adding more sites by increasing $n$ eventually only results in them being discarded. Formally, choosing larger $n$ does not increase the effective sample size $\E[\abs{\M_r}] \leq n \E[\rho_i]$. By \autoref{ldt}, this is $O(\beta_n^{-d})$ and hence is determined by the Mat\'{e}rn radius and not $n$ (provided $n$ is large enough).
\end{remark}

We next show that $\hat\theta$ with the rate-optimal choice of $\beta_n$ for the Mat\'{e}rn thinning design achieves the minimax rate of convergence. Formally, for any design, distribution of units $\eta$, and linear estimator, there exist potential outcomes such that the estimator's rate of convergence is no faster than $\lambda_n^{-\frac{d}{2 + d/\gamma}}$. By ``linear estimators'' we mean those that can be written as
\begin{equation}\label{eq:linear_estimators}
  \hat{\theta}_\omega \equiv \int_x \omega_x(\bm{S}) Y_x(\bm{S})\text{d}\eta
\end{equation}

\noindent for weights $\omega_x(\bm{S})$ that may depend on $x$, $\bm{S}$, and $n$ but not on $Y_x(\cdot)$. 

\begin{theorem}\label{thm:minimax_rate}
  Let $\hat{\theta}_\omega$ be any estimator of the form \eqref{eq:linear_estimators}. Fix any sequence $b_n\rightarrow\infty$, sequence of designs (i.e.\ distributions of $\bm{S}$), and distribution of units $\eta$ satisfying \autoref{aboundY}. There exists a sequence of potential outcomes satisfying Assumptions \ref{aani} and \ref{aboundY} and $\delta>0$ such that:
  $${\limsup_{n\to\infty} \prob\big(b_n\lambda_n^{\frac{d}{2+d/\gamma}} \abs{\hat{\theta}_\omega - \theta_0} > \delta\big) >0.}$$
\end{theorem}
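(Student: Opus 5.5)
The plan is a two-alternative construction in the style of \cite{faridani2023rate}, adapted to spatial treatments. It exploits the tension between contamination (sites placed closer than a length scale $h_n$) and scarcity (only about $(\lambda_n/h_n)^d$ well-separated sites fit in $\mathcal{R}$). Set
\begin{equation*}
  h_n = \lambda_n^{\frac{1}{1+d/(2\gamma)}} \quad\text{and}\quad a_n = c\,h_n^{-\gamma} \asymp \lambda_n^{-\frac{d}{2+d/\gamma}} .
\end{equation*}
Then $a_n$ is the target rate, and it equals $K_n^{-1/2}$, where $K_n \asymp (\lambda_n/h_n)^d$ is the number of cubes of side $h_n$ needed to tile $\mathcal{R}$. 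Every potential outcome I construct will be a bounded multiple of $a_n$ times indicators that depend only on $S \cap B(x,h_n)$. Assumption \ref{aani} then holds automatically: for $s \geq h_n$ the outcome does not depend on sites outside $B(x,s)$, and for $s < h_n$ any difference is at most $2a_n \lesssim c\,s^{-\gamma}$. Assumption \ref{aboundY} holds because the outcomes are uniformly bounded.

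\emph{Contamination alternative.} Let $Y^{(0)}_x \equiv 0$ and $Y^{(1)}_x(S) = a_n \ind\{\abs{S \cap B(x,h_n)} \geq 2\}$. Both give $\tau(s) = 0$ for every $s$, so $\theta_0 = 0$ in both worlds. By linearity,
\begin{equation*}
  \hat\theta_\omega^{(1)} - \hat\theta_\omega^{(0)} = a_n \int_x \omega_x(\bm{S}) \ind\{\abs{\bm{S}\cap B(x,h_n)}\geq 2\}\,\text{d}\eta .
\end{equation*}
If this quantity exceeds $\delta a_n/b_n$ (up to constants) with non-vanishing probability, then one of the two worlds violates the claimed bound. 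We may therefore assume that, with probability tending to one, the estimator puts essentially no net weight on units whose $h_n$-ball contains two or more treatments.

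\emph{Variance alternative.} Tile $\mathcal{R}$ by $K_n$ cubes $C_k$ of side $h_n$, and draw independent Rademacher signs $\epsilon_k$. Define
\begin{equation*}
  Y^{\epsilon}_x(S) = a_n \sum_k \epsilon_k \ind\{x \in C_k\}\, \ind\{S \cap B(x,h_n) \neq \emptyset\} .
\end{equation*}
Then $\theta_0(\epsilon) = \sum_k \epsilon_k t_k$ with $t_k \asymp a_n F(C_k)$. The estimator decomposes cell by cell as $\hat\theta_\omega = \sum_k \epsilon_k W_k$, where $W_k$ is the $\omega$-weighted integral over $C_k$ of the treatment-exposure indicator. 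Conditional on $\bm{S}$, averaging over the signs gives
\begin{equation*}
  \E_\epsilon\big[(\hat\theta_\omega - \theta_0)^2\big] = \sum_k (W_k - t_k)^2 .
\end{equation*}
On cells whose neighbourhood contains no treatment, $W_k = 0$ and the error is $t_k^2$. A second-moment argument (Paley--Zygmund over $\epsilon$ and $\bm{S}$) then yields a fixed sign vector, hence deterministic potential outcomes, for which the error exceeds a constant times $a_n^2 \sum_k F(C_k)^2$ weighted by the cells that are ``unidentified'' for the design.

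\emph{Combining the two, and the main obstacle.} The key step, and the hard one, is to show that for any design at least one alternative produces error of order $a_n/b_n$. I would argue by counting. The cubes in which $\bm{S}$ has exactly one treatment and no second treatment within $h_n$ number $O(K_n)$. If the design places many sites closer than $h_n$ to one another, the contamination alternative applies. Otherwise the treated pattern is sparse at scale $h_n$, and the signal $t_k$ cannot be recovered on cells lacking both a treated configuration and an untreated one. To force the estimator to use control cells as well, I would add to $Y^\epsilon$ a baseline term $a_n \epsilon'_k \ind\{x\in C_k\}$ with fresh signs $\epsilon'_k$. The remaining Rademacher noise on the $O(K_n)$ informative cells then has variance of order $a_n^2 K_n \cdot K_n^{-2} \cdot K_n = a_n^2$; more carefully, it is of order $K_n^{-1} \asymp a_n^2$. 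This gives error of order $a_n \gg a_n/b_n$, and the factor $b_n \to \infty$ absorbs the constants. Making this dichotomy rigorous uniformly over arbitrary designs and weights is the delicate part. I would first try to mirror the cell-wise bias--variance decomposition in \cite{faridani2023rate}, treating each cube as a ``cluster''. I expect the substantial modification to be handling treatments that straddle cube boundaries, which I would do by using the $h_n$-ball indicators in place of cube membership for the exposure.
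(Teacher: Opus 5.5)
\paragraph{Verdict: genuine gap.} You have the right ingredients: cells at the scale where the spillover bound matches the target rate, about $K_n\asymp(\lambda_n/h_n)^d$ of them, Rademacher cell signs, Paley--Zygmund, and a baseline with fresh signs. But the step you call ``delicate'' is where the whole proof lives, and your alternatives cannot deliver it because they are mis-scaled.

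Write $a_n\to 0$ for the target rate, as you do. Every outcome you construct is $a_n$ times a bounded function of $\bm{S}\cap B(x,h_n)$. By linearity, such an outcome only exposes $a_n$ times the error the procedure makes on the unscaled function, and for a reasonable procedure that error is $O_p(K_n^{-1/2})$. For instance, a Mat\'ern/Horvitz--Thompson procedure with radius a bit above $h_n+\Delta^*$ has zero bias and standard deviation $O(K_n^{-1/2})$ on every such unscaled outcome.

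Your own computations show the shortfall:
\begin{itemize}
\item \emph{Variance alternative.} Your bound $a_n^2\sum_kF(C_k)^2\asymp a_n^2/K_n$ is a squared error. It corresponds to an error of order $a_nK_n^{-1/2}\asymp a_n^2$, a full factor $a_n$ below the target, and hence below $a_n/b_n$ when $b_n$ grows slowly.
\item \emph{Baseline $a_n\epsilon'_k\ind\{x\in C_k\}$.} It contributes $a_n\sum_k\epsilon'_k\int_{C_k}\omega_x\,\text{d}\eta$, which is of order $a_nK_n^{-1/2}$ when cell weights are $O(1/K_n)$. Your product $a_n^2K_n\cdot K_n^{-2}\cdot K_n$ carries a spurious factor $K_n$. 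The variance $K_n^{-1}$ you then quote is what an \emph{order-one} baseline would give, not an $a_n$-scaled one.
\item \emph{No weight is forced.} Nothing in your construction makes the estimator put weight anywhere. With random signs, $\theta_0(\epsilon)$ is itself $O(a_nK_n^{-1/2})$, so even $\omega\equiv 0$ has small root-mean-square error. The contamination alternative only constrains \emph{net} weight.
\end{itemize}

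Separately, the exponent is wrong. Getting $h_n^{-\gamma}\asymp K_n^{-1/2}\asymp\lambda_n^{-d/(2+d/\gamma)}$ requires $h_n=\lambda_n^{d/(2\gamma+d)}$. Your $h_n=\lambda_n^{2\gamma/(2\gamma+d)}$ works only when $\gamma=d/2$.

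\paragraph{The missing idea.} The paper's mechanism needs neither a contamination alternative nor a dichotomy over designs. Take a maximal packing $\mathcal{J}_n$ by balls of radius $h_n/2$. Let $k(x)$ be the nearest center, $D_k=\ind\{\bm{S}\cap B(k,h_n)\neq\emptyset\}$, and $w_k=\int_x\omega_x(\bm{S})\ind\{k(x)=k\}\,\text{d}\eta$.

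\emph{Step 1: force weight.} Use the \emph{same-sign} outcomes $Y_x(S)=a_nD_{k(x)}$ and $a_n(1-D_{k(x)})$. They satisfy \autoref{aani} and give $\hat\theta_\omega=a_n\sum_kD_kw_k$ and $a_n\sum_k(1-D_k)w_k$ respectively. They also give $\theta_0\geq c_\theta a_n$ and $\theta_0\leq -c_\theta a_n$: every site within $h_n/2-\Delta^*$ of a center has $\tau=\pm a_n$, and these sites carry non-vanishing $F$-mass. If the error were $o_p(a_n)$ for both outcomes, then $\sum_kD_kw_k>c_\theta/2$ and $\sum_k(1-D_k)w_k<-c_\theta/2$ with probability tending to one. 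This forces $\sum_k\abs{w_k}\geq c_\theta$.

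\emph{Step 2: exploit the weight.} Use the treatment-independent, \emph{order-one} outcome $Y_x(S)=r_{k(x)}$ with $r\in\{-1,1\}^{\abs{\mathcal{J}_n}}$. Then $\theta_0=0$ and the error is $\sum_kr_kw_k$. Average over Rademacher $r$ and apply Paley--Zygmund together with $(\sum_kw_k^2)^{1/2}\geq\abs{\mathcal{J}_n}^{-1/2}\sum_k\abs{w_k}$. This gives error $\gtrsim\abs{\mathcal{J}_n}^{-1/2}\asymp a_n$ with probability bounded away from zero for some fixed $r$, which is a contradiction.

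Because all these outcomes are constant on the Voronoi cells of the packing, the boundary-straddling issue you anticipate never arises. The paper instead ties the packing radius to the hypothesized rate and then solves the constraint $\text{rate}\lesssim\abs{\mathcal{J}_n}^{1/2}$ for the exponent. Your fixed-scale bookkeeping works equally well once these two outcomes replace your alternatives.
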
 

%--------------------------------------
\subsection{Variance}
%--------------------------------------

The next result provides sufficient conditions under which the limit variance has a closed-form expression. 

\begin{assump}\label{astat}
  There exist continuous functions $\mu^1, \mu^0 : \mathrm{supp}(f) \to \R$ that do not depend on $n$ such that $\mu(X_1, \{X_1\}) = \mu^1(\tilde X_1)$ and $\mu(X_1,\emptyset) = \mu^0(\tilde X_1)$ a.s.\ for any $n$.
\end{assump}

\noindent We state the result before discussing the assumption. Let $\omega_d = \text{Vol}(B(0,1))$ and $v_d(\delta) \equiv \text{Vol}(B(0,1) \cap B(\delta e_1, 1))$ where $e_1 = (1,0,0,\ldots,0)$ is the first standard basis vector in $\R^d$. 

\begin{theorem}\label{tlimvar}
  Suppose $n\beta_n^d / \log(n) \rightarrow c$ for some sufficiently large $c \in (0,\infty]$. Under Assumptions \ref{aboundY}, \ref{adensity}, and \ref{astat},
  \begin{multline*}
    \beta_n^{-d} \sigma_n^2 \stackrel{a.s.}\longrightarrow \left( \frac{\omega_d}{p(1-p)}\int_x \big((1-p)\mu^1(x)+p\mu^0(x)\big)^2 f(x)^2 \,\text{d}x \right. \\ \left. + \int_{\norm{z}\in (1,2)} \frac{v_d(\norm{z})}{2\omega_d-v_d(\norm{z})} \,\text{d}z \int_x (\mu^1(x)-\mu^0(x))^2 f(x)^2 \,\text{d}x \right). 
  \end{multline*}
\end{theorem}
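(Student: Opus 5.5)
We compute $\sigma_n^2 = n^{-2}\sum_{i,j}\cov(Z_i^*,Z_j^*\mid \X_n)$ directly, splitting it into diagonal and off-diagonal parts. Write $a_i = \mu^1(\tilde X_i)$, $b_i=\mu^0(\tilde X_i)$, and $I_i = \bm{1}\{i\in\M_{r_n}\}$. Set $N_i = \abs{B(X_i,r_n)\cap\X_n}$, so that $\rho_i = N_i^{-1}$.

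\textbf{Diagonal terms.} Conditional on $\X_n$, $I_i$ is Bernoulli$(\rho_i)$ and independent of $D_i$. A direct computation gives
\begin{equation*}
\var(Z_i^*\mid\X_n) = \frac{1}{\rho_i}\Big(\frac{a_i^2}{p}+\frac{b_i^2}{1-p}\Big) - (a_i-b_i)^2 ,
\end{equation*}
and we note that $\frac{a^2}{p}+\frac{b^2}{1-p}-(a-b)^2 = \frac{((1-p)a+pb)^2}{p(1-p)}$.

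Under the condition $n\beta_n^d/\log n\to c$ with $c$ large, a uniform law for ball counts (as in \autoref{ldt}, but almost sure, via Bernstein's inequality plus a union bound over a grid, then Borel--Cantelli) gives
\begin{equation*}
\max_i \Big| \frac{N_i}{n\beta_n^d\omega_d f(\tilde X_i)} - 1\Big| \to 0 \quad \text{a.s.}
\end{equation*}
This needs the Lipschitz boundary from \autoref{adensity} to control balls near the boundary of the support; there the volume fraction is not $\omega_d$, but the affected points form a vanishing fraction, so their contribution is negligible. Hence
\begin{equation*}
\beta_n^{-d} n^{-2}\sum_i \var(Z_i^*\mid\X_n) \approx \frac{\omega_d}{n}\sum_i f(\tilde X_i)\Big(\frac{a_i^2}{p}+\frac{b_i^2}{1-p}\Big),
\end{equation*}
with the $-(a_i-b_i)^2$ part being $O(\beta_n^{-d}/n)\to0$. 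By the SLLN this tends to $\omega_d\int (a^2/p+b^2/(1-p)) f^2$.

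\textbf{Off-diagonal terms.} For $i\ne j$ the covariance vanishes if $\norm{X_i-X_j}>2r_n$.
\begin{itemize}
\item If $\norm{X_i-X_j}\le r_n$, then $I_iI_j=0$, so $\E[Z_iZ_j]$ reduces to the product of the centering terms and the covariance is $-\rho_i\cdot(\ldots)$. Carefully, $\cov = -(a_i-b_i)(a_j-b_j)$. Summing over roughly $n^2\beta_n^d\omega_d f$ such pairs contributes $-\omega_d\int(a-b)^2f^2$ after scaling.
\item If $r_n<\norm{X_i-X_j}\le 2r_n$, then $D$-independence gives $\E[Z_i^*Z_j^*\mid\X_n] = \rho_{ij}\rho_i^{-1}\rho_j^{-1}(a_i-b_i)(a_j-b_j)$, where $\rho_{ij}=\prob(I_iI_j=1\mid\X_n)$. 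Both are minimal marks in their balls, so $\rho_{ij}= 1/\abs{B_i\cup B_j\cap\X_n}\cdot\big(\ldots\big)$. The exact Matérn formula is $\rho_{ij}=\frac{N_i+N_j-N_{ij}}{\ldots}$; asymptotically
\begin{equation*}
\rho_{ij}\rho_i^{-1}\rho_j^{-1}-1 \to \frac{2\omega_d}{2\omega_d-v_d(\delta)}-1=\frac{v_d(\delta)}{2\omega_d-v_d(\delta)},
\end{equation*}
with $\delta=\norm{X_i-X_j}/r_n$ and $N_{ij}$ the count in the lens.
\end{itemize}
Summing over pairs, changing variables $z=(\tilde X_j-\tilde X_i)/\beta_n$, and using continuity of $f,\mu^1,\mu^0$ (\autoref{astat}) yields $\int_{1<\norm z<2}\frac{v_d}{2\omega_d-v_d}\,dz\int(a-b)^2f^2$.

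Combining, the $-\omega_d\int(a-b)^2f^2$ from near pairs merges with the diagonal term into $\frac{\omega_d}{p(1-p)}\int((1-p)a+pb)^2f^2$, matching the claim.

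\textbf{Main obstacle.} The main obstacle is making these pair sums converge almost surely rather than in probability. The sums $\frac{1}{n^2\beta_n^d}\sum_{i\ne j}g(\tilde X_i,\tilde X_j)$ are U-statistics with shrinking kernels; I would handle them with a fourth-moment bound plus Borel--Cantelli, or Hoeffding decomposition with exponential inequalities, using $n\beta_n^d\gtrsim\log n$. The replacement of random counts by their limits is handled uniformly by the a.s. ball-count lemma, which requires large $c$.
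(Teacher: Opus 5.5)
Your proposal is correct and follows essentially the paper's route: exact conditional covariances split by the regimes $\norm{X_i-X_j}\le r_n$, $r_n<\norm{X_i-X_j}\le 2r_n$ and $\norm{X_i-X_j}>2r_n$, uniform replacement of ball and lens counts by their limits, and U-statistic concentration with Borel--Cantelli for the pair sums. The one organizational difference is that the paper first merges the diagonal and near-pair terms into $T_1$ plus the nonnegative remainder $\sum_{i<j,\,\norm{\tilde X_i-\tilde X_j}\le\beta_n}(\tau(\tilde X_i)-\tau(\tilde X_j))^2=o(n^2\beta_n^d)$, whereas you take the limits $\omega_d\int(\frac{a^2}{p}+\frac{b^2}{1-p})f^2$ and $-\omega_d\int(a-b)^2f^2$ separately and add them. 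Do fix your garbled ``exact Mat\'{e}rn formula'': splitting on $U_i<U_j$ versus $U_j<U_i$ gives $\rho_{ij}=\frac{N_i+N_j}{N_iN_j(N_i+N_j-N_{ij})}$, hence $\cov(Z_i^*,Z_j^*\mid\X_n)=\frac{N_{ij}}{N_i+N_j-N_{ij}}(a_i-b_i)(a_j-b_j)$ for $r_n<\norm{X_i-X_j}\le 2r_n$, and this yields exactly the limit you assert.
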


\noindent The limit is strictly positive whenever $(1-p)\mu^1(\tilde{X}_1)+p\mu^0(\tilde{X}_1)$ or $\mu^1(\tilde{X}_1)-\mu^0(\tilde{X}_1)$ are non-degenerate. The requirement $n\beta_n^d / \log(n) \rightarrow c$ is not restrictive because if $n\beta_n^d$ were of logarithmic order, then since $\lambda_n = O(n^{1/d})$, the bias would have to shrink at a poor logarithmic rate. The most important requirement of \autoref{astat} is $\mu(X_1, \{X_1\}) = \mu^1(\tilde X_1)$ and similarly for $\mu^0$. We next provide some examples.

\begin{example}
  The requirement holds if potential outcomes scale-normalize the inputs such that $Y_x(S) = g(x/\lambda_n, S/\lambda_n)$ for some $n$-independent function $g$. This says that responses vary at the scale of the study region $\lambda_n$. For instance, suppose $\mathcal{R} = \lambda_n \cdot [-1,1]^2$ and $Y_x(S) = \sum_{s\in S} w_{xs} \kappa_s$ where $w_{xs}$ is some continuous, decaying function of $\norm{x-s}$ (to satisfy \autoref{aani}). Then $Y_x(\{s\}) = w_{xs} \kappa_s$. The requirement holds if $\kappa_s$ is a continuous function of $s/\lambda_n$, for example $\kappa_s = \abs{s_1/\lambda_n}$ where $s_1$ is the first coordinate of $s$. This says that responses are increasing in the spatial treatment's proximity to the eastern or western boundary, where distance is measured as a proportion of the region's width. 
\end{example}

\begin{example}
  The requirement holds if potential outcomes are {\em translation-invariant} in that $Y_x(S) = Y_{x+z}(S+z)$ for any $z$. This means only relative distances matter in the model, not absolute locations, which is a strong spatial stationarity condition.
\end{example}

\begin{remark}
  Given \autoref{aboundY}, $\mu^1(s), \mu^0(s)$ are continuous if $\eta$ is absolutely continuous with respect to the Lebesgue measure, $s \mapsto Y_x(\{s\})$ is a continuous function, and $\sup_{s \in \text{supp}(f)} \abs{Y_x(\{s\})}$ is an $\eta$-integrable function of $x$.
\end{remark}

Let $\bar{\sigma}_n^2 = \var(\hat\theta^*)$, the unconditional variance. The last result characterizes the bias of the variance estimator and provides conditions under which $\hat\sigma^2$ is asymptotically conservative, both for the conditional and unconditional variance.

\begin{theorem}\label{tvar}
  Under Assumptions \ref{aani}--\ref{anondeg}, $(\hat\sigma^2-\mathcal{B}_n) / \sigma_n^2 \plimarrow 1$ for 
  \begin{equation*}
    \mathcal{B}_n = \frac{1}{n^2} \sum_{i=1}^n \sum_{j\neq i} \ind\{\norm{X_i-X_j} \leq 2r_n\} (\tau(X_i) - \bar\tau) (\tau(X_j) - \bar\tau), \quad \bar\tau = \frac{1}{n} \sum_{i=1}^n\tau(X_i).
  \end{equation*}

  \noindent If additionally $n\beta_n^d / \log(n) \rightarrow c$ for some sufficiently large $c \in (0,\infty]$, then $(\hat\sigma^2-\mathcal{B}_n) / \bar\sigma_n^2 \plimarrow 1$. Under Assumptions \ref{aboundY} and \ref{adensity}, if there exists $\tau_\infty(\cdot)$ such that $\tau(X_1) \stackrel{a.s.}\longrightarrow \tau_\infty(\tilde{X}_1)$, then for $\bar{\tau}_\infty = \int_x \tau_\infty(x) f(x) \,\text{d}x$,
  \begin{equation}
    \mathcal{B}_n \plimarrow \mathrm{Vol}(B(0,2)) \int_x (\tau_\infty(x) - \bar{\tau}_\infty)^2 f(x)^2 \,\text{d}x. \label{Bconv}
  \end{equation}
\end{theorem}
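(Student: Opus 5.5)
The plan is to expand $\hat\sigma^2$ around the infeasible quantities $Z_i^*$ of \eqref{Zistar}. Write $Z_i = Z_i^* + (\tau(X_i)-\bar\tau) + \bar\tau + R_i$, where
\begin{equation*}
  R_i = Z_i - \frac{\ind\{i\in\M_{r_n}\}}{\rho_i}\left(\frac{\mu(X_i,\{X_i\})D_i}{p} - \frac{\mu(X_i,\emptyset)(1-D_i)}{1-p}\right).
\end{equation*}
This gives
\begin{equation*}
  Z_i - \hat\theta = Z_i^* + (\tau(X_i)-\bar\tau) + R_i - (\hat\theta - \bar\tau),
\end{equation*}
and the double sum defining $\hat\sigma^2$ splits into four kinds of terms.
\begin{enumerate}[(i)]
\item The main term $T_1 = n^{-2}\sum_{i,j}\ind\{\norm{X_i-X_j}\le 2r_n\}Z_i^*Z_j^*$.
\item The term $\mathcal{B}_n$, together with the omitted diagonal $n^{-2}\sum_i(\tau(X_i)-\bar\tau)^2$.
\item Cross terms between $Z^*$ and $\tau-\bar\tau$.
\item Terms involving $R_i$ or $\hat\theta-\bar\tau$.
\end{enumerate}

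For (i), local dependence is the key. Conditional on $\X_n$, $Z_i^*\indep Z_j^*$ whenever $\norm{X_i-X_j}>2r_n$, and each $Z_i^*$ is conditionally mean zero. Hence $\E[T_1\mid\X_n]=\sigma_n^2$ exactly. It then suffices to show $\var(T_1\mid\X_n)=o_p(\beta_n^{2d})$, after which \autoref{anondeg} gives $T_1/\sigma_n^2\plimarrow 1$.

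The variance is a sum over quadruples $(i,j,k,l)$ with $\norm{X_i-X_j},\norm{X_k-X_l}\le 2r_n$. Its covariance term vanishes unless $\{i,j\}$ and $\{k,l\}$ are within $2r_n$ of each other. I would count such quadruples using the degree bound $O_p(n\beta_n^d)$ from \autoref{ldt}. I would bound the cross-moments by the same moment bounds used for the third moment before \eqref{Norder}, exploiting that $\ind\{i\in\M_{r_n}\}\ind\{j\in\M_{r_n}\}=0$ when $0<\norm{X_i-X_j}\le r_n$. This fourth-order cross-moment bookkeeping is the main obstacle. I would first try to show that each connected quadruple contributes $O_p(n^{-4}\cdot n\,(n\beta_n^d)^3\cdot\text{moment})$ with net order $\beta_n^{3d}$, which is $o(\beta_n^{2d})$.

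The remaining pieces are lower order.
\begin{itemize}
\item The diagonal in (ii) is $O_p(n^{-1})=o(\beta_n^d)$, since $n\beta_n^d\to\infty$.
\item The cross terms in (iii) are conditionally mean zero. Their conditional variance, by the same local-dependence counting, is $O_p(n^{-1}\beta_n^{2d})$-ish, hence $o(\beta_n^{2d})$.
\item By \autoref{aani}, $\abs{R_i}\le c\,r_n^{-\gamma}\ind\{i\in\M_{r_n}\}\rho_i^{-1}\cdot O(1)$, since all other treated sites lie outside $B(X_i,r_n)$.
\item $\hat\theta-\bar\tau=O_p(r_n^{-\gamma}+\beta_n^{d/2})$ by \autoref{tclt}.
\end{itemize}
Using Cauchy--Schwarz against $T_1$ and the degree bound, the terms in (iv) are $o_p(\beta_n^{d})$. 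Combining these gives $(\hat\sigma^2-\mathcal{B}_n)/\sigma_n^2\plimarrow1$.

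For the unconditional claim, $\bar\sigma_n^2=\E[\sigma_n^2]+\var(\bar\tau)$, and $\var(\bar\tau)=O(n^{-1})=o(\beta_n^d)$. It then suffices that $\sigma_n^2/\E[\sigma_n^2]\plimarrow 1$. Under $n\beta_n^d/\log n\to c$ with $c$ large, a Bernstein bound plus a union bound gives uniform control $\abs{B(X_i,r_n)\cap\X_n}\asymp n\beta_n^d$ a.s. Then $\sigma_n^2$, which is a local U-statistic in $\X_n$, concentrates around its mean by a Hoeffding-decomposition variance bound, as in \autoref{tlimvar}.

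For \eqref{Bconv}, I would first replace $\bar\tau$ by $\bar\tau_\infty$ (via the LLN and dominated convergence using \autoref{aboundY}) and $\tau(X_i)$ by $\tau_\infty(\tilde X_i)$ (via the a.s.\ convergence and boundedness). The errors are controlled by the pair count $O_p(n^2\beta_n^d)$. What remains is a U-statistic with kernel $\ind\{\norm{\tilde x-\tilde y}\le 2\beta_n\}(\tau_\infty(\tilde x)-\bar\tau_\infty)(\tau_\infty(\tilde y)-\bar\tau_\infty)$. Its mean is computed by substituting $\tilde y=\tilde x+\beta_n u$ with $u\in B(0,2)$. Continuity of $f$ and Lebesgue differentiation, with the Lipschitz boundary of \autoref{adensity} making boundary effects negligible, produce the factor $\mathrm{Vol}(B(0,2))\int(\tau_\infty-\bar\tau_\infty)^2f^2$ at the normalization of the kernel's volume, as in the statement. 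Finally, the Hoeffding decomposition shows that its variance is negligible relative to the square of the mean, which gives the stated convergence in probability.
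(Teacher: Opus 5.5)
Your plan follows the paper's proof. You expand $\hat\sigma^2$ around $Z_i^*$ and note that the main term $T_1$ has conditional mean exactly $\sigma_n^2$ and conditional variance $O(\beta_n^{3d})$. That fourth-moment bookkeeping is already done in Step 2 of the proof of Theorem \ref{tclt}, so you can cite it. You then absorb the $\tau$-part into $\mathcal{B}_n$ plus an $O(n^{-1})$ diagonal, pass to $\bar\sigma_n^2$ by total variance, and get \eqref{Bconv} by the substitution $y=x+\beta_n u$.

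You are more careful than the paper about the cross term $2n^{-2}\sum_{i,j}K_{ij}Z_i^*(\tau(X_j)-\bar\tau)$, where $K_{ij}=\ind\{\norm{X_i-X_j}\le 2r_n\}$; the paper's $[I]+[II]$ split silently drops it. Its conditional variance is $O(\beta_n^{3d})$, not $n^{-1}\beta_n^{2d}$. The weights $\sum_j K_{ij}(\tau(X_j)-\bar\tau)$ are of order $n\beta_n^d$, and $\sum_{i,k}\abs{\cov(Z_i^*,Z_k^*\mid\X_n)}=O(n^2\beta_n^d)$. This is still $o(\beta_n^{2d})$.

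The step that fails as written is ``Cauchy--Schwarz against $T_1$'' for the terms in (iv). The matrix $(K_{ij})$ is not positive semidefinite: three collinear sites with consecutive gaps $1.5r_n$ give the eigenvalue $1-\sqrt2$. So $\abs{a^\top Kb}\le(a^\top Ka)^{1/2}(b^\top Kb)^{1/2}$ is unavailable, even for nonnegative vectors. The valid degree-weighted inequality is $\sum_{i,j}K_{ij}\abs{a_i}\abs{b_j}\le(\sum_i N(i,2r_n)a_i^2)^{1/2}(\sum_j N(j,2r_n)b_j^2)^{1/2}$. It discards the sparsity of $M_iM_j$ and gives only $O_p(r_n^{-\gamma}n\beta_n^{2d})$ for $n^{-2}\sum_{i,j}K_{ij}Z_i^*R_j$. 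Since $n\beta_n^d\gtrsim r_n^d$, this is not $o(\beta_n^d)$ when $\gamma<d$.

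Bound first moments directly instead. Using $\abs{R_j}\le Cr_n^{-\gamma}N_jM_j$, \eqref{zbound}, and Lemma \ref{lprod}, you get $\E[\sum_{i,j}K_{ij}\abs{Z_i^*}\abs{R_j}\mid\X_n]=O(r_n^{-\gamma}n^2\beta_n^d)$. The other terms of (iv) follow the same way, together with $\hat\theta-\bar\tau=O_p(r_n^{-\gamma}+\beta_n^{d/2})$.

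Two smaller points remain.
\begin{enumerate}
\item $\sigma_n^2$ is not literally a U-statistic. The third term of \eqref{s2e} has weights $m_{ij}/(N(i,r_n)+N(j,r_n)-m_{ij})$ that depend on all of $\X_n$. You must first replace them by $v_d(\delta_{ij}/\beta_n)/(2\omega_d-v_d(\delta_{ij}/\beta_n))$, with $\delta_{ij}=\norm{\tilde X_i-\tilde X_j}$, using uniform relative control of the counts as in the proof of Theorem \ref{tlimvar}. That is where $n\beta_n^d/\log n\to c$ is needed.
\item $\mathcal{B}_n$ itself is $O_p(\beta_n^d)\to 0$. Your substitution, like the paper's Step 3, proves $\beta_n^{-d}\mathcal{B}_n\plimarrow\mathrm{Vol}(B(0,2))\int_x(\tau_\infty(x)-\bar\tau_\infty)^2f(x)^2\,\text{d}x$. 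So write the normalization as $\beta_n^{-d}$; dividing by the kernel's volume $\mathrm{Vol}(B(0,2\beta_n))$ would remove the factor $\mathrm{Vol}(B(0,2))$.
\end{enumerate}
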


\noindent A sufficient condition for $\tau(X_1) \stackrel{a.s.}\longrightarrow \tau_\infty(\tilde{X}_1)$ is \autoref{astat}, which implies $\tau(X_1) = \mu^1(\tilde{X}_1)-\mu^0(\tilde{X}_1)$, so \eqref{Bconv} holds for $\tau_\infty(x) = \mu^1(x)-\mu^0(x)$.

At first glance, the integral \eqref{Bconv} has the form of a treatment effect heterogeneity term $\var(\tau_\infty(\tilde{X}_1))$, which is the usual reason for conservative variance estimation in design-based settings. In fact, the integral does not generally equal $\var(\tau_\infty(\tilde{X}_1))$ because the density $f$ is squared in the expression. Indeed, the contribution of $\var(\tau_\infty(\tilde{X}_1))$ to $\hat\sigma^2$ ends up being second-order, and \eqref{Bconv} is the {\em covariance} contribution from pairs of spatial treatments that are within distance $2r_n$. This can be seen from the inner sum in the expression of $\mathcal{B}_n$, which includes only covariance terms. The reason for the form of the limit is that $\norm{X_i-X_j} \leq 2r_n$ if and only if $\norm{\tilde{X}_i-\tilde{X}_j} \leq 2\beta_n$, which tends to zero, so \eqref{Bconv} is the total covariance between site pairs whose scaled locations are similar.

%----------------------------------------------------------------------
\section{Practical Implementation}\label{spi}
%----------------------------------------------------------------------

To implement our design and estimator, there are several parameters to specify. The first is the worst-case rate of decay $\gamma$ for spillovers, which we propose to calibrate as follows. Suppose spillovers decay exactly like $s^{-\gamma}$ for some unknown $\gamma$, and suppose the interval $\Delta$ in $\theta_0$ is given by $\Delta = [L, U]$. Then $100(U/L)^{-\gamma} \leq P$ where $P \in [0,100]$ is an upper bound on how the percentage of the spillover at distance $L$ that remains at distance $U$. Solving for $\gamma$ results in 
\begin{equation*}
  \gamma \geq -\frac{\log (P/100)}{\log (U/L)},
\end{equation*}

\noindent so the right-hand side can be taken as the worst-case rate of decay. The idea is that it is easier in practice to form a worst-case prior over $P$ rather than $\gamma$ directly.

Second is the radius of the Mat\'{e}rn thinning design $r = \beta_n \lambda_n$. Since $\text{Vol}(\mathcal{R})$ is of order $\lambda_n^d$, one can take 
\begin{equation*}
  \lambda_n = \text{Vol}(\mathcal{R})^{1/d}.
\end{equation*}

\noindent Choosing $\beta_n$ according to \eqref{optr} is rate-optimal but results in an asymptotic bias. As in \cite{leung2022rate}, we utilize an ``undersmoothed design'' to justify the standard confidence interval $\hat\theta \pm 1.96 \hat\sigma$. This corresponds to choosing 
\begin{equation}
  \beta_n = \lambda_n^{-\frac{1}{1+0.5d/\tilde\gamma}} \label{betaunder}
\end{equation}

\noindent for some $\tilde\gamma < \gamma$. In other words, we specify a more conservative choice for the spillover rate of decay when constructing $r$ to remove the asymptotic bias. Then given $r$, the analyst may consider estimands $\theta_0$ such that the endpoints of the annulus interval $\Delta$ are small relative to $r$. 

Third is the number of proto-sites $n$. As discussed in \autoref{r.n}, the number of potential sites $\sum_{i=1}^n \ind\{i \in \M_r\}$ does not increase with $n$ once it is sufficiently large. We therefore suggest simulating $\sum_{i=1}^n \ind\{i \in \M_r\}$ for increasingly large values of $n$ to find the point $n^*$ at which it plateaus. Given sufficient budget for experimentation, one can choose $n \approx n^*$.

%----------------------------------------------------------------------
\section{Simulation Study}\label{ssims}
%----------------------------------------------------------------------

Let $\mathcal{R} = \lambda_n [-1,1]^2$ and $\eta$ be a homogeneous Poisson point process with intensity 1. We set $\Delta = [0,\bar{\Delta}]$ for $\bar{\Delta}=0.5$ and $f$ to be the uniform distribution on $[-1,1]^2$. Conditional on $\eta$, draw $\{\varepsilon_x\}_{x\in\eta} \stackrel{iid}\sim \N(0,1)$. Potential outcomes are given by
\begin{equation*}
  Y_x(S) = \mathbb{I}_x(S) \sum_{s\in S} w_{xs}\alpha_x + \varepsilon_x, \quad \mathbb{I}_x(S) = \ind\{S \cap B(x,\bar{\Delta}) \neq \emptyset\}.
\end{equation*}

\noindent where $w_{xs} = \min\{\norm{x-s}^{-4}, 1\}$ and $\alpha_x = 1 + x_1/\lambda_n$ with $x_1$ denoting the first component of $x$. Due to the choice of $-4$ in the spatial weights $w_{xs}$, \autoref{aani} holds for any $\gamma \in (0,3)$ \citep[][Proposition 1]{leung2022rate}. The choice of $\alpha_x$ serves to generate a spatial trend. 
%Under these parameters,
%\begin{equation*}
%  \mu(s,S) = \frac{\sum_{x \in \eta \cap B(s,\bar{\Delta})} (\mathbb{I}_x(S) \sum_{z\in S} w_{xz}\alpha_x + \varepsilon_x)}{\abs{\eta \cap B(s,\bar{\Delta})}}
%\end{equation*}
%
%\noindent and
%\begin{equation*}
%  \tau(s) = \frac{\sum_{x \in \eta \cap B(s,\bar{\Delta})} w_{xs}\alpha_x}{\abs{\eta \cap B(s,\bar{\Delta})}}.
%\end{equation*}

With an exception discussed later, we choose $\beta_n$ according to \eqref{betaunder} with $\tilde\gamma \in \{1,2\}$ since $\gamma < 3$. \autoref{simresults} summarizes the results of 5000 simulation draws. Column $n^*$ reports the (average) effective sample size $\sum_{i=1}^n \ind\{i \in \M_r\}$. Column SE reports the average value of $\hat\sigma$ across draws and column CI the rate at which $\hat\theta \pm 1.96\hat\sigma$ covers $\theta_0$. All random quantities described above are redrawn each simulation, including $\eta$ and $\{\varepsilon_x\}_{x\in\eta}$. To compute $\theta_0$ for each simulation draw, we generate 5000 additional subdraws, which keep $\eta$ and $\{\varepsilon_x\}_{x\in\eta}$ fixed, in accordance with our design-based setup. 

Column $\text{SE}^*$ reports the average value of $\sigma_n = \var(\hat\theta^* \mid \X_n)^{1/2}$ across simulation draws and $\text{SE}^{**}$ reports $\bar\sigma_n = \var(\hat\theta^*)^{1/2}$ (recall \autoref{tvar}). To compute $\sigma_n$, for each simulation draw, we generate 5000 subdraws that keep $\eta$, $\{\varepsilon_x\}_{x\in\eta}$, and $\X_n$ fixed. For $\bar\sigma_n$, we redraw $\X_n$. Columns $\text{CI}^*$ ($\text{CI}^{**}$) report coverage of confidence intervals using $\text{SE}^*$ ($\text{SE}^{**}$) in place of SE.

\begin{table}[ht]
\small
\centering
\caption{Simulation Results}
\begin{threeparttable}
\resizebox{\columnwidth}{!}{%
\begin{tabular}{llrrr|rrrr|rrrr}
\toprule
$\tilde\gamma$ & $n$ & $n^*$ & $\lambda_n$ & $r$ & $\theta_0$ & $\hat\theta$ & SE & CI & $\text{SE}^*$ & $\text{CI}^*$ & $\text{SE}^{**}$ & $\text{CI}^{**}$ \\
\midrule
\multirow[m]{3}{*}{1} & 500 & 107.7 & 80 & 8.9 & 0.8162 & 0.8155 & 0.2104 & 0.9344 & 0.2076 & 0.9410 & 0.2111 & 0.9464 \\
 & 1k & 161.2 & 120 & 11.0 & 0.8159 & 0.8193 & 0.1720 & 0.9484 & 0.1673 & 0.9476 & 0.1695 & 0.9516 \\
 & 2k & 214.2 & 160 & 12.6 & 0.8157 & 0.8158 & 0.1472 & 0.9444 & 0.1427 & 0.9482 & 0.1440 & 0.9496 \\
\cmidrule(lr){1-13}
\multirow[m]{3}{*}{2} & 500 & 302.1 & 80 & 4.3 & 0.8162 & 0.8151 & 0.1299 & 0.9452 & 0.1222 & 0.9308 & 0.1281 & 0.9442 \\
 & 1k & 559.5 & 120 & 4.9 & 0.8159 & 0.8183 & 0.0967 & 0.9546 & 0.0907 & 0.9412 & 0.0947 & 0.9506 \\
 & 2k & 934.8 & 160 & 5.4 & 0.8157 & 0.8179 & 0.0758 & 0.9504 & 0.0712 & 0.9354 & 0.0737 & 0.9440 \\
\cmidrule(lr){1-13}
\multirow[m]{3}{*}{2} & 500 & 236.2 & 20 & 1.4 & 0.8165 & 0.9298 & 0.1640 & 0.9140 & 0.1406 & 0.8372 & 0.1458 & 0.8528 \\
 & 1k & 425.1 & 30 & 1.6 & 0.8158 & 0.8856 & 0.1187 & 0.9216 & 0.1054 & 0.8722 & 0.1088 & 0.8870 \\
 & 2k & 669.7 & 40 & 1.7 & 0.8158 & 0.8640 & 0.0930 & 0.9354 & 0.0840 & 0.8872 & 0.0862 & 0.9008 \\
\bottomrule
\end{tabular}}
\begin{tablenotes}[para,flushleft]
  \footnotesize 5k simulations. $n^* = \sum_{i=1}^n\ind\{i\in\M_r\}$, $\text{SE}^*=\var(\hat\theta^* \mid \X_n)^{1/2}$, $\text{SE}^{**} = \var(\hat\theta^*)^{1/2}$, and $\text{CI}^*$ ($\text{CI}^{**}$) \\ $=$ coverage of CIs using $\text{SE}^*$ ($\text{SE}^{**}$) in place of SE.
\end{tablenotes}
\end{threeparttable}
\label{simresults}
\end{table}

The results show that SE is larger than $\text{SE}^*$ and $\text{SE}^{**}$, which supports \autoref{tvar}. The first three rows report results for $\tilde\gamma=1$, resulting in a more conservative choice of $r$ that favors lower bias but higher variance. As a result, SE is larger and $n^*$ smaller compared to $\tilde\gamma=2$. Despite the relatively high value of $\gamma$, we do not see a perceptible difference in bias between the two regimes because $r$ is sizeable even in the $\tilde\gamma=2$ case.

To generate a higher bias, for the last three rows, we halve the values of $r$ from the middle three rows. Since this substantially increases the effective sample size, we also shrink $\lambda_n$ so that the values of $n^*$ are more comparable to those of the other results. Due to the increase in bias, the coverages of the ``oracle'' confidence intervals $\text{CI}^*$ and $\text{CI}^{**}$ drop substantially below the nominal level. The coverage rate of CI is noticeably better because SE is larger than $\text{SE}^*$ and $\text{SE}^{**}$, more so than the other rows. While this is partly due to the conservativeness result in \autoref{tvar}, it is likely also because the bias term $\hat\theta-\hat\theta^*$ in \eqref{decomp} contributes additional variation that the ``oracle'' SEs do not account for since they only measure the variance of $\hat\theta^*$. Since SE is constructed using $\hat\theta$, it may partly account for that variation, resulting in improved coverage.

%----------------------------------------------------------------------
\section{Conclusion}
%----------------------------------------------------------------------

This paper studies optimal experimental design for place-based interventions. We measure their effects using an ``uncontaminated'' estimand comparing responses of units near a single intervention site to the counterfactual of no intervention, averaged over all potential sites. We propose a Mat\'{e}rn thinning design that ensures sites are separated by a distance of at least $r$. This distance determines a trade-off between bias and variance, and balancing their rates is our optimality criterion.

We propose a Horvitz-Thompson estimator for $\theta_0$ and show that a rate-optimal choice of $r$ achieves the minimax rate of convergence over all potential outcomes. The rate is determined by the size of the region and the rate of decay of spillovers but not the number of hypothetical treatment sites (what we refer to as ``proto-sites''). An unlimited budget for experimentation therefore cannot generate additional power gains. Intuitively, intervention sites must be kept sufficiently separated, so a region of given size can only accommodate so many.

The Mat\'{e}rn design induces an approximate local dependence structure, which we exploit to establish asymptotic normality of the estimator. Our proof utilizes tighter Wasserstein bounds than the standard local dependence central limit theorem to obtain optimal rates. Lastly, we provide a variance estimator and conditions under which it is asymptotically conservative.

\appendix
\numberwithin{equation}{section} % include section number in equation numbering

%----------------------------------------------------------------------
\section{Proofs of Main Results}
%----------------------------------------------------------------------

%--------------------------------------
\subsection{\autoref{tclt}}
%--------------------------------------

{\bf Bias.} For $X_i \in \bm{S}$, under the event $\{D_i = 1 \cap i \in \M_{r_n}\}$,
\begin{equation}
  \abs{\mu(X_i,\bm{S}) - \mu(X_i,\{X_i\})} \leq \frac{\int_x \abs{Y_x(\bm{S}) - Y_x(\{X_i\})} \bm{1}\{\norm{X_i-x} \in \Delta\} \,\text{d}\eta}{\int_x \bm{1}\{\norm{X_i-x} \in \Delta\} \,\text{d}\eta} \leq c\,r_n^{-\gamma} \quad\text{a.s.} \label{Zbias}
\end{equation}

\noindent under Assumptions \ref{aani} and \ref{aboundY}. Thus 
\begin{equation*}
  \abs{\hat\theta - \hat\theta^*} \leq c\,r_n^{-\gamma} \cdot \frac{1}{n} \sum_{i=1}^n \frac{\bm{1}\{i \in \M_{r_n}\}}{\rho_i} \left( \frac{D_i}{p} + \frac{1-D_i}{1-p} \right).
\end{equation*}

\noindent The mean of $n^{-1} \sum_{i=1}^n \rho_i^{-1} \bm{1}\{i \in \M_{r_n}\}$ is 1, so it remains to show that its variance concentrates. Conditional on $\X_n$, the variance equals
\begin{equation}
  \frac{1}{n^2} \sum_{i=1}^n \frac{1-\rho_i}{\rho_i} + \frac{1}{n^2} \sum_{i=1}^n \sum_{j\neq i} \left( \frac{\prob(i \in \M_{r_n}, j \in \M_{r_n} \mid \X_n)}{\rho_i\rho_j} - 1 \right). \label{ve1}
\end{equation}

\noindent Let $N(i,r_n) = \abs{B(i,r_n) \cap \X_n} = \rho_i^{-1}$. The summands in the second term are identically zero if $\norm{\tilde{X}_i-\tilde{X}_j} > 2r_n$ and otherwise bounded by 2, as shown in \eqref{Se2} of \autoref{lprod}, so
\begin{equation*}
  \abs{\eqref{ve1}} \leq \frac{1}{n^2} \sum_{i=1}^n N(i,r_n) + 2\frac{1}{n^2} \sum_{i=1}^n N(i,2r_n).
\end{equation*}

\noindent Since $f$ is bounded, $\max_i N(i,c\,r_n) = O(n\beta_n^d)$ for any $c>0$, and the right-hand side is $o(1)$ since $\beta_n \rightarrow 0$. \qed

\bigskip
\noindent {\bf Normal approximation.} Since proto-sites are i.i.d., $\E[\hat\theta^* \mid \X_n] - \theta_0$ is mean zero with variance $n^{-1}\var(\tau(X_1))$, so by \autoref{aboundY}, its absolute value is $O_p(n^{-1/2})$. Suppose for the moment that 
\begin{equation}
  (\hat\theta^* - \E[\hat\theta^* \mid \X_n])/\sigma_n \dlimarrow \N(0,1).
  \label{goal}
\end{equation}

\noindent By \autoref{anondeg}, \eqref{goal} implies $\abs{\hat\theta^* - \E[\hat\theta^* \mid \X_n]} = O_p(\beta_n^{d/2})$. Since $n\beta_n^d \rightarrow \infty$, $\abs{\hat\theta^* - \E[\hat\theta^* \mid \X_n]}$ is the asymptotically dominant term, so it only remains to show \eqref{goal}.

We apply \autoref{lstein} conditioning on $\X_n$. Let $A_{ij} = \ind\{\norm{X_i-X_j} \leq 2r_n\}$, $M_i = \ind\{i \in \M_{r_n}\}$, and recall the definition of $Z_i^*$ from \eqref{Zistar}. We show that the right-hand side of \eqref{wass2} is $o(1)$ a.s., which implies convergence in distribution conditional on $\X_n$. Unconditional convergence follows from the bounded convergence theorem.

\bigskip
\noindent {\em Step 1.} Let $N(i,r_n) = \abs{B(i,r_n) \cap \X_n} = \rho_i^{-1}$, and abbreviate $N_i \equiv N(i,r_n)$. By \autoref{aboundY}, 
\begin{equation}
  \sup_n \max_i \abs{Z_i^*} \leq C(N_i M_i+1) \label{zbound}
\end{equation}

\noindent Consider the first term on the right-hand side of \eqref{wass2}. For the case $i\neq j\neq k$, 
\begin{equation*}
  \E[\abs{Z_i^*Z_j^*Z_k^*} \mid \X_n] \leq C^3\,\E[(N_iM_i+1)(N_jM_j+1)(N_kM_k+1) \mid \X_n], 
\end{equation*}

\noindent which is uniformly $O(1)$ a.s.\ by \autoref{lprod}. For the case $i=j=k$, $\E[\abs{Z_i^*Z_j^*Z_k^*} \mid \X_n] = \E[\abs{Z_i^*}^3 \mid \X_n]$, which is $O(N_i^2)$. By \autoref{ldt} and \autoref{adensity},
\begin{equation}
  \max_i N(i,c\,r_n) = O(n\beta_n^d) \quad\text{a.s.}\quad \forall\,c > 0, \label{Norder} 
\end{equation}

\noindent so the expectation is $O((n\beta_n^d)^2)$ a.s. For the case $i=j\neq k$, $\E[\abs{Z_i^*Z_j^*Z_k^*} \mid \X_n] = \E[(Z_i^*)^2 \abs{Z_k^*} \mid \X_n] \leq C \E[(N_iM_i+1)^2 (N_kM_k+1) \mid \X_n]$. By \eqref{zbound}, the leading order term is $N_i^2N_k \E[M_iM_k \mid \X_n]$, and by \autoref{lprod} and \eqref{Norder}, this is uniformly $O(n\beta_n^d)$ a.s. As a result,
\begin{multline*}
  \sum_{i=1}^n\sum_{j=1}^n\sum_{k=1}^n\E[\abs{Z_i^*Z_j^*Z_k^*} \mid \X_n] A_{ij} A_{ik} = \sum_{i=1}^n\E[\abs{(Z_i^*)^3} \mid \X_n] + \\
  3\sum_{i=1}^n\sum_{k=1}^n\E[\abs{(Z_i^*)^2Z_k^*} \mid \X_n] A_{ik} + \sum_{i=1}^n\sum_{j\neq i} \sum_{k\neq i\neq j} \E[\abs{Z_i^*Z_j^*Z_k^*} \mid \X_n] A_{ij} A_{ik} \\
  = n\cdot O((n\beta_n^d)^2) + n \cdot O((n\beta_n^d)^2) + n\cdot O((n\beta_n^d)^2) \cdot O(1) = O(n^3\beta_n^{2d}) \quad\text{a.s.}
\end{multline*}

\noindent By \autoref{anondeg}, $\sigma^2 \equiv \var(\sum_{i=1}^n Z_i^* \mid \X_n)$ is at least order $n^2\beta_n^d$, so
\begin{equation*}
  \frac{1}{\sigma^3}\sum_{i=1}^n\sum_{j=1}^n\sum_{k=1}^n\E[\abs{Z_i^*Z_j^*Z_k^*} \mid \X_n] A_{ij} A_{ik}
  = O\!\left(\frac{n^3\beta_n^{2d}}{n^3\beta_n^{3d/2}}\right) = O(\beta_n^{d/2}) \quad\text{a.s.},
\end{equation*}

\noindent which is $o(1)$ since $\beta_n\to 0$. 

\bigskip
\noindent {\em Step 2.} Consider the second term on the right-hand side of \eqref{wass2}
\begin{equation}
  \sum_{i,j,k,\ell}\cov(Z_i^*Z_j^*, Z_k^*Z_\ell^* \mid \X_n) A_{ij} A_{k\ell}. \label{covterm}
\end{equation}

\noindent There are several cases to consider. The case $i=j=k=\ell$ corresponds to
\begin{equation}
  \sum_{i=1}^n \var((Z_i^*)^2 \mid \X_n) = \sum_{i=1}^n \big(\E[(Z_i^*)^4 \mid \X_n] - \E[(Z_i^*)^2 \mid \X_n]\big). \label{alleq}
\end{equation}

\noindent By \autoref{aboundY} and \eqref{zbound},
\begin{equation}
  \E[\abs{Z_i^*}^k \mid \X_n] = O(N(i,r_n)^{k-1}), \label{lstep5}
\end{equation}

\noindent so $\eqref{alleq} = O(nN_i^3) = O_p(n^4 \beta_n^{3d})$ by \eqref{Norder}.

The case $i=j=k\neq \ell$ corresponds to
\begin{equation*}
  \sum_i \sum_{\ell\neq i} \big( \E[(Z_i^*)^3Z_\ell^* \mid \X_n] - \E[(Z_i^*)^2 \mid \X_n] \E[Z_i^*Z_\ell^* \mid \X_n] \big) A_{i\ell}.
\end{equation*}

\noindent By \eqref{lstep5} and \eqref{Norder}, $\E[(Z_i^*)^2 \mid \X_n] = O(N_i) = O(n\beta_n^d)$ uniformly in $i$. By \eqref{Se2} of \autoref{lprod}, $\E[Z_i^*Z_\ell^* \mid \X_n] \leq 2$. By \eqref{zbound}, the leading order term of $\E[(Z_i^*)^3Z_\ell^* \mid \X_n]$ is $N_i^3 N_\ell \E[M_iM_\ell \mid \X_n] \leq 2N_i^2$, which is uniformly $O((n\beta_n^d)^2)$. Therefore the right-hand side of the previous display is at most of order 
\begin{equation*}
  n \cdot n\beta_n^d \cdot ( (n\beta_n^d)^2 + n\beta_n^d ) = O(n^4 \beta_n^{3d}).
\end{equation*}

The case $i=j\neq k\neq\ell$ corresponds to
\begin{equation*}
  \sum_i \sum_{k\neq i} \sum_{\ell\neq k\neq i} \big( \E[(Z_i^*)^2 Z_k^*Z_\ell^* \mid \X_n] - \E[(Z_i^*)^2 \mid \X_n] \E[Z_k^*Z_\ell^* \mid \X_n] \big) A_{k\ell}.
\end{equation*}

\noindent The term in the parentheses is zero if $A_{ik}=0$ and $A_{i\ell}=0$, so we can upper bound this by
\begin{equation*}
  \sum_i \sum_{k\neq i} \sum_{\ell\neq k\neq i} \abs{ \E[(Z_i^*)^2 Z_k^*Z_\ell^* \mid \X_n] - \E[(Z_i^*)^2 \mid \X_n] \E[Z_k^*Z_\ell^* \mid \X_n] } (A_{ik}+A_{i\ell}) A_{k\ell}.
\end{equation*}

\noindent The leading order term of $\E[(Z_i^*)^2 Z_k^*Z_\ell^* \mid \X_n]$ is $N_i^2N_kN_\ell \E[M_iM_kM_\ell \mid \X_n]$, which is uniformly $O( n\beta_n^d)$ by \autoref{lprod} and \eqref{Norder}, so the previous term is at most of order
\begin{equation*}
  n \cdot n\beta_n^d \cdot n\beta_n^d \cdot n\beta_n^d = O(n^4 \beta_n^{3d}).
\end{equation*}

Using similar arguments, the case $i=k\neq j\neq\ell$ corresponds to
\begin{multline*}
  \sum_i \sum_{j\neq i} \sum_{\ell\neq i\neq j} \big( \E[(Z_i^*)^2Z_k^*Z_\ell^* \mid \X_n] - \E[Z_i^*Z_k^* \mid \X_n] \E[Z_i^*Z_\ell^* \mid \X_n] \big) A_{ij} A_{i\ell} \\ \leq n \cdot O(n\beta_n^d) \cdot O(n\beta_n^d) \cdot O(n\beta_n^d) = O(n^4 \beta_n^{3d}).
\end{multline*}

The case $i=j\neq k=\ell$ corresponds to
\begin{equation*}
  \sum_{i=1}^n \sum_{k\neq i} \big( \E[(Z_i^*)^2(Z_k^*)^2 \mid \X_n] - \E[(Z_i^*)^2 \mid \X_n] \E[(Z_k^*)^2 \mid \X_n] \big).
\end{equation*}

\noindent Since the term in the parentheses is zero if $A_{ik}=0$, we may multiply it by $A_{ik}$, and the result is order $n \cdot n\beta_n^d \cdot (n\beta_n^d)^2 = O(n^4\beta_n^{3d})$.

The last case is $i\neq j\neq k\neq \ell$. By \eqref{zbound},
\begin{equation*}
  \E[\abs{Z_i^*Z_j^*Z_k^*Z_\ell^*} \mid \X_n]\leq C^4 \E[(N_iM_i+1)(N_jM_j+1)(N_kM_k+1)(N_\ell M_\ell+1) \mid \X_n].
\end{equation*}

\noindent By \autoref{lprod}, $\prod_{m\in S}N_m\cdot\E[\prod_{m\in S}M_m\mid\X_n] = O(1)$ uniformly in $\abs{S}\leq 4$, so $\abs{\cov(Z_i^*Z_j^*, Z_k^*Z_\ell^* \mid \X_n)} = O(1)$ uniformly. Moreover, the covariance is zero whenever $\norm{X_a-X_b}>2r$ for all $a\in\{i,j\}$, $b\in\{k,\ell\}$, so the quadruple sum over the covariances is of asymptotic order at most
\begin{equation*}
  \sum_{i,j} A_{ij} \sum_{k,\ell} (A_{ik} + A_{i\ell} + A_{jk} + A_{j\ell}) A_{k\ell} = O(n \cdot n\beta_n^d) \cdot O((n\beta_n^d)^2) = O(n^4\beta_n^{3d}).
\end{equation*}

\noindent Putting all the cases together, by \autoref{anondeg}, 
\begin{equation*}
  \frac{2}{\pi\sigma^4}\var(W \mid \X_n) = O\left( \frac{n^4\beta_n^{3d}}{(n^2\beta_n^d)^2} \right) = o(1) \quad\text{a.s.}
\end{equation*}

\qed

%--------------------------------------
\subsection{\autoref{thm:minimax_rate}}
%--------------------------------------

Steps 2--4 of the proof are similar to \cite{faridani2023rate}, but Step 1 requires substantial additions. 

\begin{secdefinition}
  For any $\mathcal{A}\subseteq \mathbb{R}^d$, the {\em packing number} $P_{s}\left(\mathcal{A}\right)$ is the maximal number of disjoint balls of radius $s$ with centers in $A$. 
\end{secdefinition}

Fix a sequence of positive numbers $a_n \to \infty$. Suppose that there is a sequence of designs, linear estimators \eqref{eq:linear_estimators}, { such that for all } { potential outcomes satisfying Assumptions \ref{aani} and \ref{aboundY}} { and} all $\delta>0$,
\begin{equation*}
  {\prob\left(a_n|\hat{\theta}_\omega-\theta_0|> \delta\right)\to 0.}
\end{equation*}

Let $\mathcal{J}_n\subseteq \lambda_n^{-1}\text{supp}(f)$ be a maximal $a_n^{1/\gamma}/2$ packing of $\lambda_n^{-1}\text{supp}(f)$. By maximality,  $\mathcal{J}_n$ is also a $a_n^{1/\gamma}$ cover of $\lambda_n^{-1}\text{supp}(f)$. Let $k(x)\equiv \arg\min_{k\in\mathcal{J}_n} \rho(x,k)$ be the closest cover member to the location $x$. By construction $\rho(x,k(x))\leq a_n^{1/\gamma}$. 

It will be convenient to sum weights over all locations $x$ that share the same nearest cover member  $k(x)\in\mathcal{J}_n$. So, define: 
$$w_k(S)\equiv \int_x \omega_x(S)  \mathbf{1}\left\{k(x)=k\right\} \,\text{d}\eta.$$
\noindent For brevity we will suppress the dependence of $w_k$ on treatment assignment $S$ from now on.

The proof will proceed in four steps. In Step 1, we show that the ``effective sample size'' cannot be greater than $|\mathcal{J}_n|$. In Steps 2--3 we show the estimator cannot always converge faster than the square root of its effective sample size. In Step 4 we conclude the proof and solve for the minimax rate of convergence.

{\bf Step 1:}  Define the random variable $D_k$ as the indicator that at least one location within distance $a_n^{1/\gamma}$ of location $k$ was treated. Consider the potential outcomes: $Y_x(S)=a_n^{-1}D_{k(x)}$. These potential outcomes satisfy Assumptions 1--2 and do not depend on the distribution of $S$, the estimator, or $\eta$. They guarantee that any linear estimator of the form \eqref{eq:linear_estimators} can be reduced to the following weighted sum:
\begin{align*}
    \hat{\theta}_\omega &= \int_x \omega_x(S) Y_x(S) \,\text{d}\eta \\
    &= \int_x \omega_x(S) a_n^{-1}D_{k(x)} \,\text{d}\eta\\
    &= \sum_{k=1}^{|\mathcal{J}_n|} a_n^{-1}D_{k} \int_x \omega_x(S) \mathbf{1}\left\{k(x)=k\right\} \text{d}\eta\\
    &= \sum_{k=1}^{|\mathcal{J}_n|} a_n^{-1}D_{k}  w_k.
\end{align*}

Next we lower-bound $\theta_0$ under these potential outcomes. Let $\Delta^*$ denote the right endpoint of the interval $\Delta$ that defines the annulus in $\theta_0$. If $x \in \bigcup_{k\in \mathcal{J}_n} B_k( a_n^{1/\gamma}/2-\Delta^*) $, then $B_x(\Delta^*) \subseteq B_{k}(a_n^{1/\gamma}/2)$ for some $k\in \mathcal{J}_n$.  So all units within the annulus centered at $x$ share a common closest cover member $k$ and \autoref{aboundY} guarantees that this annulus has nonzero measure in $\eta$. So for such $x$ the causal effect under these potential outcomes is $\tau(x) = a_n^{-1}$. This lets us lower-bound the estimand:
\begin{equation}\label{eq:theta0_bound1}
  \theta_0 = \int_x \tau(x) \,\text{d}F(x) \geq a_n^{-1}\prob\left(X \in \bigcup_{k\in \mathcal{J}_n} B_k\left( a_n^{1/\gamma}/2-\Delta^*\right)\right).
\end{equation}

To lower-bound $\theta_0$ we need only lower-bound the probability above. Recall that $\mathcal{J}_n$ is a  $a_n^{1/\gamma}/2$-packing. So the  balls centered on the elements of $\mathcal{J}_n$ of radius $a_n^{1/\gamma}/2$ are disjoint. So:
\begin{equation*}
   \prob\left(X \in \bigcup_{k\in \mathcal{J}_n} B_k\left( a_n^{1/\gamma}/2-\Delta^*\right)\right) = \sum_{k\in \mathcal{J}_n} \prob\left( X \in B_k\left(a_n^{1/\gamma}/2-\Delta^*\right)\right).
\end{equation*}

By \autoref{lballprob}, for any sequence $s_n \to \infty$, there exist universal constants $c_1>0,c_2 \in (0,1)$ such that:
  $$ \prob\left(X \in  B_x(s_n)\right) \geq c_2\min\left\{ \frac{s_n^d}{\lambda_n^d},1\right\} \qquad \forall x \in \mathcal{R}, n>c_1.$$
\noindent We then have $\min\{ \lambda_n^{-d}(a_n^{1/\gamma}/2-\Delta^*)^d,1\} / \prob( X \in B_k(a_n^{1/\gamma}/2-\Delta^*)) = O(1)$. Thus:
\begin{equation*}
  |\mathcal{J}_n| c_2\min\left\{ \frac{(a_n^{1/\gamma}/2-\Delta^*)^d}{\lambda_n^d},1\right\} \prob\left(X \in \bigcup_{k\in \mathcal{J}_n} B_k\left( a_n^{1/\gamma}/2-\Delta^*\right)\right)^{-1} = O(1).
\end{equation*}

By \autoref{lemma:packingnumber} below, the packing number is of order $|\mathcal{J}_n| \asymp  a_n^{-d/\gamma}\lambda_n^d$.\footnote{For any sequences $\{a_n\}, \{b_n\}$, we write $a_n \asymp b_n$ to mean $a_n = O(b_n)$ and $b_n = O(a_n)$.} So, we must have:
\begin{equation*}
   \liminf_{n\to \infty}\prob\left(X \in \bigcup_{k\in \mathcal{J}_n} B_k\left( a_n^{1/\gamma}/2-\Delta^*\right)\right) >0.
\end{equation*}

\noindent Thus, combining this with (\ref{eq:theta0_bound1}), there is a constant $c_\theta>0$ such that for large enough $n$, $\theta_0 > c_\theta a_n^{-1}$.

By hypothesis, $\prob(a_n|\sum_{k=1}^{|\mathcal{J}_n|} a_n^{-1}D_kw_k - \theta_0|>\delta)\to 0$. Multiplying by $a_n$, using  the preceding lower bound on $\theta_0$, and rearranging: $\prob(\sum_{k=1}^{|\mathcal{J}_n|} D_kw_k > c_\theta/2)\to 1$. By an identical argument, $\prob(\sum_{k=1}^{|\mathcal{J}_n|} (1-D_k)w_k < -c_\theta/2)\to 1$. Combining: $ \sum_{k=1}^{|\mathcal{J}_n|}  |w_k| \geq \frac{1}{2}| \sum_{k=1}^{|\mathcal{J}_n|} D_kw_k|+\frac{1}{2}| \sum_{k=1}^{|\mathcal{J}_n|} (1-D_k)w_k|\geq c_\theta/2+o_p(1)$. So:
\begin{equation*}
    \prob\left( \sum_{k=1}^{|\mathcal{J}_n|}    |w_k| >c_\theta/4\right)\to 1.
\end{equation*}

{\bf Step 2:} Define the $|\mathcal{J}_n| \times 1$ random vectors $\mathbf{R}$ where the elements $R_k$ are  i.i.d. Rademacher random variables independent of $S$. Next we will use the Paley–Zygmund Inequality to show:
$$ \liminf_{n\to \infty}\prob\left(\sqrt{|\mathcal{J}_n|} \sum_{k=1}^{|\mathcal{J}_n|}   w_k R_k >c_\theta/8\right) >0.$$

First notice that by independence and symmetry of $R_k$, $ \sum_{k=1}^{|\mathcal{J}_n|} w_k R_k$ has the same distribution as $\sum_{k=1}^{|\mathcal{J}_n|}  |w_k|R_k$. So it suffices to work with $|w_k|$. Define $c_k \equiv |w_k|$ and define $S_n \equiv \sum_{k=1}^{|\mathcal{J}_n|}c_kR_k$. So $ \sqrt{|\mathcal{J}_n|} \sum_{k=1}^{|\mathcal{J}_n|}   w_k R_k  \stackrel{d}{=} \sqrt{|\mathcal{J}_n|}S_n$. By Step 1: $\prob( \sum_{k=1}^{|\mathcal{J}_n|} c_k > c_\theta/4) \to 1$. Conditional on the $c_k$: $\E[S_n \mid \mathbf{c}]=0$ and $\var\left(S_n \mid \mathbf{c}\right) = \sum_{k=1}^{|\mathcal{J}_n|} c_k^2$. By Cauchy-Schwarz: $\sum_{k=1}^{|\mathcal{J}_n|} c_k^2 \geq |\mathcal{J}_n| (\frac{1}{|\mathcal{J}_n|}\sum_{k=1}^{|\mathcal{J}_n|} c_k)^2 $. By the Marcinkiewicz–Zygmund Inequality: $\E\left[S_n^4\mid \mathbf{c}\right] \leq 3(\sum_{k=1}^{|\mathcal{J}_n|} c_k^2)^2$. By the  Paley–Zygmund Inequality with constant $\frac{1}{2}$:
\begin{multline}\label{eq:pz_bound}
    \prob\left(|S_n| \geq \sqrt{\frac{1}{2}\sum_{k=1}^{|\mathcal{J}_n|} c_k^2} \,\bigg|\, \mathbf{c}\right) = \prob\left(S_n^2 \geq \frac{1}{2}\mathbb{E}\left[S_n^2 \mid \mathbf{c}\right] \,\bigg|\, \mathbf{c}\right) \\
    \geq \left(1-\frac{1}{2}\right)^2 \frac{\E\left[S_n^2 \mid \mathbf{c}\right]^2}{\E\left[S_n^4 \mid \mathbf{c}\right]} 
    \geq \frac{\left(\sum_{k=1}^{|\mathcal{J}_n|} c_k^2\right)^2}{4\times 3\left(\sum_{k=1}^{|\mathcal{J}_n|} c_k^2\right)^2} \geq \frac{1}{12}.
\end{multline}

By Cauchy-Schwarz:  $(\sum_{k=1}^{|\mathcal{J}_n|} c_k^2)^{1/2} \geq |\mathcal{J}_n|^{-1/2} \sum_{k=1}^{|\mathcal{J}_n|} c_k $. By Step 1:
\begin{equation}\label{eq:cs_bound}
    \prob\left(\sqrt{\sum_{k=1}^{|\mathcal{J}_n|} c_k^2} > \frac{c_{\theta}}{4\sqrt{|\mathcal{J}_n|}} \right) \geq  \prob\left(  \sum_{k=1}^{|\mathcal{J}_n|} c_k  > \frac{c_\theta}{4}\right)\to 1
\end{equation}

By the symmetry of $S_n$, for any $\delta \in (0,c_\theta / (4\sqrt{2}))$: $  \prob(\sqrt{|\mathcal{J}_n|} S_n \geq  \delta ) \geq 0.5 \prob(|S_n|\geq \delta / \sqrt{|\mathcal{J}_n|})$. Then by (\ref{eq:pz_bound}) and (\ref{eq:cs_bound}):
\begin{multline*}
    \prob\left(|S_n|\geq \delta / \sqrt{|\mathcal{J}_n|}\right) \geq \prob\left(|S_n| \geq \sqrt{\frac{1}{2}\sum_{k=1}^{|\mathcal{J}_n|} c_k^2} \cap \sqrt{\sum_{k=1}^{|\mathcal{J}_n|} c_k^2} > \sqrt{2}\delta /\sqrt{|\mathcal{J}_n|}\right)  \\ \geq   \prob\left(|S_n| \geq \sqrt{\frac{1}{2}\sum_{k=1}^{|\mathcal{J}_n|} c_k^2}\mid \sqrt{\sum_{k=1}^{|\mathcal{J}_n|} c_k^2} > \sqrt{2}\delta /\sqrt{|\mathcal{J}_n|}\right)  \prob\left(\sqrt{\sum_{k=1}^{|\mathcal{J}_n|} c_k^2} > \sqrt{2}\delta / \sqrt{|\mathcal{J}_n|}\right) \\ \geq \frac{1}{12}+o(1).
\end{multline*}

So $\liminf_{n\to \infty}\prob( \sqrt{|\mathcal{J}_n|} S_n  \geq  \delta ) > 0$. Now substitute $ \sqrt{|\mathcal{J}_n|} \sum_{k=1}^{|\mathcal{J}_n|}   w_k R_k  = \sqrt{|\mathcal{J}_n|} S_n$. So we have shown:
\begin{equation}\label{eq:rademacher}
    \liminf_{n\to \infty}\prob\left( \sqrt{|\mathcal{J}_n|} \sum_{k=1}^{|\mathcal{J}_n|}   w_kR_k >\frac{c_\theta}{8}\right)=\liminf_{n\to \infty}\prob\left( \sqrt{|\mathcal{J}_n|} \sum_{k=1}^{|\mathcal{J}_n|}   |w_k|R_k >\frac{c_\theta}{8}\right) >0.
\end{equation}

{\bf Step 3:} We now show that there exists another sequence of potential outcomes (that may depend on the design) such that: $$ {\liminf_{n\to \infty}\prob\left(\sqrt{|\mathcal{J}_n|}\left|\hat{\theta}_\omega-\theta_0\right|> \frac{c_\theta}{8}\right) >0}$$ 

To do this, notice that since (\ref{eq:rademacher}) holds over randomly selected $\mathbf{R}$, it also holds conditional on some non-stochastic realized vectors $\mathbf{r}\in \{-1,1\}^{|\mathcal{J}_n|}$:
$$  \liminf_{n\to \infty}\max_{\mathbf{r}\in \{-1,1\}^{|\mathcal{J}_n|}}\prob\left( \sqrt{|\mathcal{J}_n|}\sum_{k=1}^{|\mathcal{J}_n|}   w_k r_k >\frac{c_\theta}{8}\right) >0.$$

Now consider the potential outcomes $Y_x(S)=r_{k(x)}$ where the vector $\mathbf{r}$ for each $n$ is the maximizer in the limit above. Since these do not depend on treatment at all, $\theta_0=0$. These satisfy Assumptions \ref{aani} and \ref{aboundY}. Under these potential outcomes: 
$$ \liminf_{n\to \infty}\prob\left(\sqrt{|\mathcal{J}_n|}\left|\hat{\theta}_\omega-\theta_0\right|> \frac{c_\theta}{8}\right)=\liminf_{n\to \infty}\max_{\mathbf{r}\in \{-1,1\}^{|\mathcal{J}_n|}}\prob\left( \sqrt{|\mathcal{J}_n|} \sum_{k=1}^{|\mathcal{J}_n|}   w_k r_k >\frac{c_\theta}{8}\right) >0.$$

{\bf Step 4:} We have two guarantees, one by hypothesis and the other was established in Step 3. {First,}
\begin{equation*}
  {\prob\left(a_n\left|\hat{\theta}_\omega-\theta_0\right|>\frac{c_\theta}{8}\right) \to 0}
\end{equation*}

\noindent {for any sequence of potential outcomes satisfying Assumptions \ref{aani} and \ref{aboundY}. Second, there exists a sequence of potential outcomes satisfying the assumptions (that might depend on the design and estimator)  such that:}
\begin{equation*}
 {\liminf_{n\to \infty}\prob\left(\sqrt{|\mathcal{J}_n|}\left|\hat{\theta}_\omega-\theta_0\right|> \frac{c_\theta}{8}\right)>0.}
\end{equation*}

\noindent For both to be true, it is necessary that: $a_n/\sqrt{|\mathcal{J}_n|} \to 0$. By Lemma \ref{lemma:packingnumber},  $a_n = O(\sqrt{|\mathcal{J}_n|}) = O(\lambda_n^{d/2}a_n^{-d/(2\gamma)})$. Solving:
$$ a_n = O\big( \lambda_n^{\frac{d}{2+d/\gamma}} \big).$$
 
\qed

\begin{seclemma}\label{lemma:packingnumber}
  Under \autoref{adensity}, for all sequences $s_n \to \infty$, the packing number satisfies
  $$ P_{s_n}\left(\lambda_n^{-1}\text{supp}(f)\right) \asymp  \max\left\{\frac{\lambda_n^d}{s_n^d},1 \right\}.$$
\end{seclemma}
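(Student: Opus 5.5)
The plan is to prove the two bounds separately, using a volume comparison. Write $K = \text{supp}(f)$, which is bounded with nontrivial volume and Lipschitz boundary by \autoref{adensity}, and $A_n = \lambda_n^{-1}K$. The scaling $x \mapsto \lambda_n x$ is a bijection between $s$-packings of $A_n$ and $(\lambda_n s)$-packings of $K$. Hence $P_{s_n}(A_n) = P_{\lambda_n s_n}(K)$ and, with $t_n \equiv \lambda_n s_n$, it suffices to show $P_{t}(K) \asymp \max\{t^{-d},1\}$ uniformly in $t>0$. (I read the radius so that the stated formula matches; if the convention instead leaves $s_n$ unscaled, the same argument applies verbatim to $\text{supp}(f)$ scaled by $\lambda_n$, giving $\lambda_n^d/s_n^d$.) Note $P_t(K)\ge 1$ always, since $K$ is nonempty.

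\emph{Upper bound.} Let $t$ be small. Disjoint balls of radius $t$ centered in $K$ all lie in the $t$-enlargement $K^{t} = \{x: \text{dist}(x,K)\le t\}$. This set has volume at most $\text{Vol}(K^{1}) < \infty$ for $t \le 1$, because $K$ is bounded. Comparing volumes gives $P_t(K)\,\omega_d t^d \le \text{Vol}(K^1)$, so $P_t(K) = O(t^{-d})$. For $t \ge 1$, two disjoint balls of radius $t$ have centers more than $2t$ apart. Therefore $P_t(K)$ is bounded by a constant depending only on $\text{diam}(K)$, and equals $1$ once $2t > \text{diam}(K)$.

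\emph{Lower bound.} This is the step I expect to need the regularity assumption. A maximal $t$-packing $\{k_j\}$ has the property that every point of $K$ lies within $2t$ of some $k_j$; otherwise a ball could be added. So $K \subseteq \bigcup_j B(k_j,2t)$, and $\text{Vol}(K) \le P_t(K)\,\omega_d (2t)^d$. Since $\text{Vol}(K)>0$, we get $P_t(K) \gtrsim t^{-d}$, and together with $P_t(K)\ge1$ this yields the lower bound $\max\{t^{-d},1\}$ up to constants.

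The Lipschitz boundary is therefore not strictly needed beyond positive volume and boundedness, but it can be invoked to guarantee $\text{Vol}(K^1)<\infty$ and $\text{Vol}(K)>0$ cleanly. Combining the two bounds in the regimes $t\to0$ and $t$ bounded away from zero gives the claim.
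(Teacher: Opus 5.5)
Your argument is correct and is essentially the paper's proof. Both rescale to packings of $\text{supp}(f)$, bound the packing number above by volume (the paper phrases this as $\text{supp}(f)$ lying inside a fixed ball whose $t$-packing number is of order $t^{-d}$), and bound it below by noting that a maximal $t$-packing is a $2t$-cover; your uniform-in-$t$ formulation just replaces the paper's case split on $\lambda_n/s_n$.

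One correction to your reduction. The stated rate $\max\{\lambda_n^d/s_n^d,1\}$ corresponds to $t_n=s_n/\lambda_n$, i.e.\ packing $\lambda_n\,\text{supp}(f)$, the support of $F$. That is the set used in the paper's reduction and in Step 1 of the minimax proof. Taking $t_n=\lambda_n s_n$ instead gives $\max\{(\lambda_n s_n)^{-d},1\}$, which is eventually $1$. So your parenthetical reading is the correct one, and you should drop the main-text claim that $t_n=\lambda_n s_n$ reproduces the stated formula.
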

\begin{proof}
    Consider three cases. In the case where $\lambda_n/s_n \to 0 $, then the packing number is one and the claim is satisfied. If $\lambda_n/s_n$ is bounded above, then both the right and left hand sides are bounded and neither converges to zero and the claim is satisfied. The rest of the proof covers the case where $\lambda_n/s_n \to \infty$.

    By isomorphism, it will be sufficient to prove that $P_{t_n}\left(\text{supp}(f)\right) \asymp  \max\{t_n^{-d},1\} $ for all sequences $t_n \to 0 $.

    First we show that  $P_{t_n}\left(\text{supp}(f)\right) = O(\max\{t_n^{-d},1\})$. To see why, assume for the sake of contradiction that $P_{t_n}\left(\text{supp}(f)\right)t_n^d \to \infty$. The $t_n$ packing number of any fixed ball scales with $t_n^{-d}$. So if $P_{t_n}\left(\text{supp}(f)\right)$ grows faster than $t_n^{-d}$, then  there is no ball of any radius that can contain $\text{supp}(f)$, which contradicts the boundedness of Assumption \ref{adensity}. Therefore, $P_{t_n}\left(\text{supp}(f)\right) = O(\max\{t_n^{-d},1\})$.

    Next we show that $\max\{t_n^{-d},1 \} = O(P_{t_n}(\text{supp}(f)))$. To see why, notice that any $t_n$-packing is a $2t_n$-cover. So $\text{Vol}(\text{supp}(f)) \leq 2^d t_n^d P_{t_n}\left(\text{supp}(f)\right)$. Assume for sake of contradiction that $ P_{t_n}\left(\text{supp}(f)\right) = o(t_n^d)$. Then $2^d t_n^d P_{t_n}\left(\text{supp}(f)\right) \to 0$ and $\text{Vol}(\text{supp}(f))$ would have to equal zero, which contradicts the assumption of the lemma. 

    We have shown that for all $t_n\to 0$ we have $\max\{t_n^{-d},1 \} = O(P_{t_n}(\text{supp}(f)))$ and $P_{t_n}(\text{supp}(f))) = O(\max\{s_n^{-d},1\})$. So, $P_{t_n}(\text{supp}(f)) \asymp  \max\{t_n^{-d},1\} $ and thus $P_{t_n}(\lambda_n^{-1}\text{supp}(f)) \asymp  \max\{s_n^{-d}\lambda_n^d,1\} $.
    
\end{proof}

%--------------------------------------
\subsection{\autoref{tlimvar}}
%--------------------------------------

Under \autoref{astat}, redefine $\tau(x) \equiv \mu^1(x) - \mu^0(x)$, so the argument of $\tau$ is $\tilde{X}_i$. By \eqref{s2e}, $\var(\sum_{i=1}^n Z_i^* \mid \X_n) = T_1 + T_2 + T_3$ where
\begin{align*}
  T_1 &= \frac{1}{p(1-p)} \sum_{i=1}^n N(i,r_n)\big( (1-p)\mu^1(\tilde{X}_i) + p\mu^0(\tilde{X}_i) \big)^2, \\
  T_2 &= \sum_{\substack{i < j \\ \norm{\tilde{X}_i-\tilde{X}_j}\leq \beta_n}} (\tau(\tilde{X}_i) - \tau(\tilde{X}_j))^2, \\
  T_3 &= \sum_{\substack{i \neq j \\ \beta_n < \norm{\tilde{X}_i-\tilde{X}_j} \leq 2\beta_n}} \frac{m_{ij}}{N(i,r_n) + N(j,r_n) - m_{ij}} \,\tau(\tilde{X}_i) \tau(\tilde{X}_j).
\end{align*}

\noindent {\bf Term $T_1$.} By \autoref{ldt}, $N(i,r_n) = n\omega_d\beta_n^d f(\tilde X_i)(1+o(1))$ a.s.\ uniformly in $i$, so 
\begin{multline*}
  \sum_{i=1}^n N(i,r_n) \big( (1-p)\mu^1(\tilde{X}_i) + p\mu^0(\tilde{X}_i) \big)^2 \\ = n\omega_d\beta_n^d \sum_{i=1}^n f(\tilde{X}_i) \big( (1-p)\mu^1(\tilde{X}_i) + p\mu^0(\tilde{X}_i) \big)^2 + o(n^2\beta_n^d) \quad \text{a.s.}
\end{multline*}

\noindent using \autoref{adensity}. Applying the strong law of large numbers to the first term on the right,
\begin{multline*}
  (n^2\beta_n^d)^{-1} \frac{1}{p(1-p)} \sum_{i=1}^n N(i,r_n) \big( (1-p)\mu^1(\tilde{X}_i) + p\mu^0(\tilde{X}_i) \big)^2 \\ \stackrel{a.s.}\longrightarrow \frac{\omega_d}{p(1-p)} \int_x \big( (1-p)\mu^1(x) + p\mu^0(x) \big)^2 f(x)^2 \,\text{d}x.
\end{multline*}

\noindent {\bf Term $T_2$.} By \autoref{astat}, $\tau$ is continuous on the compact set $\mathrm{supp}(f)$, hence uniformly continuous. For any $\varepsilon > 0$ and $n$ sufficiently large, every pair with $\abs{\tilde X_i - \tilde X_j} \leq \beta_n$ satisfies $\abs{\tau(\tilde{X}_i) - \tau(\tilde{X}_j)} < \varepsilon$. By \eqref{Norder},
\begin{equation*}
  \sum_{\substack{i < j \\ \norm{\tilde{X}_i-\tilde{X}_j}\leq \beta_n}} (\tau(\tilde{X}_i) - \tau(\tilde{X}_j))^2 \leq \varepsilon^2 \sum_{i=1}^n N(i,r_n) = O(\varepsilon^2 n^2\beta_n^d) \quad \text{a.s.}
\end{equation*}

\noindent Since $\varepsilon > 0$ is arbitrary, $T_2 = o(n^2\beta_n^d)$ a.s.

\bigskip
\noindent {\bf Term $T_3$.} Let $\delta_{ij} = \norm{\tilde X_i - \tilde X_j}$. By \eqref{intersectunif}, $m_{ij} = n\beta_n^d v_d(\delta_{ij}/\beta_n) f(\tilde X_i)(1+o(1))$ a.s.\ uniformly over $i,j$ such that $\delta_{ij} \in (\beta_n, 2\beta_n)$. Together with \autoref{ldt},
\begin{equation*}
  \max_{i,j\colon \delta_{ij} \in (\beta_n, 2\beta_n)} \bigg| \frac{m_{ij}}{N(i,r_n)+N(j,r_n)-m_{ij}} - \frac{v_d(\delta_{ij}/\beta_n)}{2\omega_d - v_d(\delta_{ij}/\beta_n)} \bigg| \stackrel{a.s.}\longrightarrow 0.
\end{equation*}

\noindent Then
\begin{equation*}
  T_3 = \sum_{\substack{i \neq j \\ \beta_n < \norm{\tilde{X}_i-\tilde{X}_j} \leq 2\beta_n}} \frac{v_d(\delta_{ij}/\beta_n)}{2\omega_d-v_d(\delta_{ij}/\beta_n)} \,\tau(\tilde X_i)\tau(\tilde X_j) + o(n^2\beta_n^d) \quad\text{a.s.}
\end{equation*}

\noindent Call the first term on the right-hand side $U_n$, which is proportional to a U-statistic with uniformly bounded kernel by \autoref{aboundY}. By a Bernstein inequality \citep[e.g.][Theorem 1.6]{ai2022hoeffding}, there exists $C>0$ such that for any $n$,
\begin{equation*}
  \prob\left( \abs{(n^2\beta_n^d)^{-1} (U_n-\E[U_n])} > t \right) \leq 2\,\text{exp}\left\{ - C\frac{n (t\beta_n^d)^2}{s_n^2 + t\beta_n^d} \right\}
\end{equation*} % take \varepsilon = t\beta_n^d

\noindent where $s_n^2$ is the variance of the U-statistic summand, which is bounded above by $\prob(\norm{\tilde{X}_i-\tilde{X}_j} \leq 2\beta_n)$ by \autoref{aboundY}. The latter is $O(\beta_n^d)$ by \autoref{ldt}, so the right-hand side scales like $\text{exp}\{-Cn\beta_n^d\}$. 

By assumption, $n\beta_n^d / \log(n) \rightarrow c$ sufficiently large, so taking $c > C^{-1}$, $\text{exp}\{-Cn\beta_n^d\}$ is summable over $n$. By Borel-Cantelli, $(n^2\beta_n^d)^{-1} U_n$ converges almost surely to the limit of its expectation
\begin{equation*}
  \beta_n^{-d} \int_{x \in \R^d} \int_{x'\colon \norm{x-x'} \in (\beta_n,2\beta_n)} \frac{v_d(\norm{x-x'}/\beta_n)}{2\omega_d-v_d(\norm{x-x'}/\beta_n)} \,\tau(x)\tau(x') f(x) f(x') \,\text{d}x \,\text{d}x'.
\end{equation*}

\noindent By a change of variables $z \equiv (x'-x)/\beta_n$, continuity of $\tau$ and $f$, and the bounded convergence theorem, the limit of the previous display is
\begin{equation*}
  \int_{\norm{z}\in (1,2)} \frac{v_d(\norm{z})}{2\omega_d-v_d(\norm{z})} \,\text{d}z \int_x \tau(x)^2 f(x)^2 \,\text{d}x.
\end{equation*}

\noindent Thus $T_3 = n^2\beta_n^d(U^* + o(1))$ a.s.

\qed

%--------------------------------------
\subsection{\autoref{tvar}}
%--------------------------------------

The first two steps of the proof establish $(\hat\sigma^2 - \mathcal{B}_n)/\sigma_n^2 \plimarrow 1$. The third step derives the probability limit of $\mathcal{B}_n$. The fourth step proves $(\hat\sigma^2 - \mathcal{B}_n)/\bar{\sigma}_n^2 \plimarrow 1$.

{\bf Step 1.} Abbreviate $K_{ij} = \ind\{ \norm{\tilde{X}_i-\tilde{X}_j} \leq 2\beta_n \}$. Let
\begin{equation*}
  W_i = \frac{\bm{1}\{i \in \M_{r_n}\}}{\rho_i} \left( \frac{\mu(X_i,\{X_i\})D_i}{p} - \frac{\mu(X_i,\{\emptyset\})(1-D_i)}{1-p} \right)
\end{equation*}

\noindent and $\bar{W} = n^{-1} \sum_{i=1}^n W_i$. By \eqref{Zbias}, \autoref{aboundY}, \eqref{Norder}, and \autoref{tclt},
\begin{align*}
  \beta_n^{-d} \hat\sigma^2 &= \frac{1}{n^2\beta_n^d} \sum_{i=1}^n \sum_{j=1}^n K_{ij} (W_i-\bar{W}) (W_j-\bar{W}) + O_p(r_n^{-\gamma}) \\
			    &= \frac{1}{n^2\beta_n^d} \sum_{i=1}^n \sum_{j=1}^n K_{ij} (W_i-\E[\bar{W} \mid \X_n]) (W_j-\E[\bar{W} \mid \X_n]) + O_p(r_n^{-\gamma} + \beta_n^{d/2}).
\end{align*}

\noindent The remainder term is $o_p(1)$. Further decompose
\begin{multline*}
  \frac{1}{n^2\beta_n^d} \sum_{i=1}^n \sum_{j=1}^n K_{ij} (W_i-\E[\bar{W} \mid \X_n]) (W_j-\E[\bar{W} \mid \X_n]) = \\ \frac{1}{n^2\beta_n^d} \sum_{i=1}^n \sum_{j=1}^n K_{ij} (W_i-\E[W_i \mid X_i]) (W_j-\E[W_j \mid X_j]) \\ + \frac{1}{n^2\beta_n^d} \sum_{i=1}^n \sum_{j=1}^n K_{ij} (\E[W_i \mid X_i]-\E[\bar{W} \mid \X_n]) (\E[W_j \mid X_j]-\E[\bar{W} \mid \X_n]) \equiv [I] + [II].
\end{multline*}

\noindent {\bf Step 2.} We show that $[I]$ is consistent for $\beta_n^{-d} \sigma_n^2$. Recall the definition of $Z_i^*$ from \eqref{Zistar}, and notice
\begin{equation*}
  [I] = \frac{1}{n^2\beta_n^d} \sum_{i=1}^n \sum_{j=1}^n K_{ij} Z_i^* Z_j^*,
\end{equation*}

\noindent so $\E[[I] \mid \X_n] = \beta_n^{-d} \sigma_n^2$. Its conditional variance is
\begin{multline*}
  \frac{1}{n^4\beta_n^{2d}} \sum_{i=1}^n \sum_{j=1}^n \sum_{k=1}^n \sum_{\ell=1}^n K_{ij}K_{k\ell} \cov(Z_i^*Z_j^*, Z_k^*Z_\ell^* \mid \X_n) \\ 
  \leq \frac{1}{n^4\beta_n^{2d}} \sum_{i=1}^n \sum_{j=1}^n \sum_{k=1}^n \sum_{\ell=1}^n K_{ij}K_{k\ell}(K_{ik}+K_{i\ell}+K_{jk}+K_{j\ell}) \cov(Z_i^*Z_j^*, Z_k^*Z_\ell^* \mid \X_n)
\end{multline*}

\noindent since $Z_i^*Z_j^* \indep Z_k^*Z_\ell^* \mid \X_n$ if $K_{ik}+K_{i\ell}+K_{jk}+K_{j\ell}=0$. At shown in step 2 of the proof of \autoref{tclt}, the conditional covariance is $O(1)$ in absolute value a.s., so by \eqref{zbound}, the upper bound above is a.s.\ of order $n^{-4}\beta_n^{-2d} n (n\beta_n^d)^3 = \beta_n^d = o(1)$.

\bigskip
\noindent {\bf Step 3.} We derive the probability limit of 
\begin{equation*}
  [II] = \frac{1}{n^2\beta_n^d}\sum_{i=1}^n (\tau(X_i)-\bar\tau)^2 + \beta_n^{-d} \mathcal{B}_n
\end{equation*}

\noindent The first term on the right is $O((n\beta_n^d)^{-1}) = o(1)$ a.s.\ by the strong law. The second term is
\begin{equation}
  \frac{1}{n^2 \beta_n^d} \sum_{i=1}^n \sum_{j\neq i} K_{ij} (\tau(X_i) - \theta_0)(\tau(X_j) - \theta_0). \label{offd}
\end{equation}

\noindent by \autoref{aboundY} and proto-sites being i.i.d. Its expectation equals
\begin{equation*}
  \frac{n-1}{n\beta_n^d}\int_x \int_y \bm{1}\left\{\norm{x-y} \leq 2\beta_n\right\} (\tau(\lambda_n x) - \theta_0) (\tau(\lambda_n y) - \theta_0) f(x) f(y) \,\text{d}x \,\text{d}y.
\end{equation*}

\noindent By a change of variables $y = x + \beta_n u$, this equals
\begin{equation*}
  \frac{n-1}{n}\int_x \int_{\norm{u} \leq 2} (\tau(\lambda_n x) - \theta_0) (\tau(\lambda_n(x+\beta_n u)) - \theta_0) f(x) f(x+\beta_n u) \,\text{d}u \,\text{d}x.
\end{equation*}

\noindent Since $f$ is continuous on its support, $\tau(\lambda_n\tilde{X}_1) \stackrel{a.s.}\longrightarrow \tau_\infty(\tilde{X}_1)$, and $(\lambda_n x) / (\lambda_n (x+\beta_n u)) \rightarrow 1$, this has the desired limit \eqref{Bconv} by bounded convergence.

It remains to bound the variance of \eqref{offd}, which is
\begin{equation*}
  \frac{1}{n^4\beta_n^{2d}} \sum_{i=1}^n \sum_{j=1}^n \sum_{k=1}^n \sum_{\ell=1}^n \cov( K_{ij} (\tau(X_i) - \theta_0)(\tau(X_j) - \theta_0), K_{k\ell} (\tau(X_k) - \theta_0)(\tau(X_\ell) - \theta_0) ).
\end{equation*}

\noindent Since $\{X_i\}_{i=1}^n$ is i.i.d., this is zero when $i\neq j\neq k\neq \ell$, $i\neq j\neq k=\ell$, and $i=j \neq k=\ell$. The contribution of case $i=j=k=\ell$ only involves one summation and is therefore $O(n^{-4}\beta_n^{2d} n) = o(1)$ by \autoref{aboundY}. The contribution of case $i\neq j=k=\ell$ only involves two summations and is therefore $O(n^{-4}\beta_n^{2d} n^2) = o(1)$. Finally, the contribution of case $i\neq j=k\neq \ell$ is
\begin{multline*}
  \frac{1}{n^4\beta_n^{2d}} \sum_{i=1}^n \sum_{j=1}^n \sum_{\ell=1}^n \big( \E[K_{ij}K_{j\ell}(\tau(X_i) - \theta_0)(\tau(X_j) - \theta_0)^2(\tau(X_\ell) - \theta_0)] \\ - \E[K_{ij}(\tau(X_i) - \theta_0)(\tau(X_j) - \theta_0)] \E[K_{j\ell}(\tau(X_j) - \theta_0)(\tau(X_\ell) - \theta_0)] \big).
\end{multline*}

\noindent By \autoref{ldt} and \autoref{adensity}, this is $O(n^{-1}\beta_n^{-2d} \beta_n^{2d}) = o(1)$. 

\bigskip
\noindent {\bf Step 4.} By the law of total variance, $\bar{\sigma}_n^2 = \E[\sigma_n^2] + \var(n^{-1} \sum_{i=1}^n \tau(X_i))$. Step 2 showed that $\beta_n^{-d}\hat\sigma^2 = \beta_n^{-d}\sigma_n^2 + o_p(1)$. In the proof of \autoref{tlimvar}, it is shown that $\beta_n^{-d}\sigma_n^2 = n^{-2}\beta_n^{-d}(T_1 + T_2 + T_3)$, and both $n^{-2}\beta_n^{-d}T_1$ and $n^{-2}\beta_n^{-d}T_3$ are consistent for their expectations. Provided $n^{-2}\beta_n^{-d}T_2$ has the same property, $\beta_n^{-d}\sigma_n^2 = \beta_n^{-d}\E[\sigma_n^2] + o_p(1)$.

Term $n^{-2}\beta_n^{-d}T_2$ is proportional to a U-statistic. By Theorem 1.6 of \cite{ai2022hoeffding},
there exists $C>0$ such that for any $n$,
\begin{equation*}
  \prob\left( \abs{(n^2\beta_n^d)^{-1} (U_n-\E[U_n])} > t \right) \leq 2\,\text{exp}\left\{ - C\frac{n (t\beta_n^d)^2}{s_n^2 + t\beta_n^d} \right\}
\end{equation*} % take \varepsilon = t\beta_n^d

\noindent where $s_n^2$ is the variance of the U-statistic summand, which is bounded above by $\prob(\norm{\tilde{X}_i-\tilde{X}_j} \leq 2\beta_n)$ by \autoref{aboundY}. The latter is $O(\beta_n^d)$ by \autoref{ldt}, so the right-hand side scales like $\text{exp}\{-Cn\beta_n^d\}$. By assumption, $n\beta_n^d / \log(n) \rightarrow c$ sufficiently large, so taking $c > C^{-1}$, $\text{exp}\{-Cn\beta_n^d\}$ is summable over $n$. By Borel-Cantelli, $(n^2\beta_n^d)^{-1} U_n$ converges almost surely to the limit of its expectation.

The remaining part is
\begin{equation*}
  \beta_n^{-d} \var\bigg(\frac{1}{n} \sum_{i=1}^n \tau(X_i)\bigg) = \frac{1}{n^2\beta_n^d} \sum_{i=1}^n \var(\tau(X_i)) = o(1)
\end{equation*}

\noindent by \autoref{aboundY}. Therefore, $\beta_n^{-d}\hat\sigma^2 = \sigma_n^2 + \mathcal{B}_n + o_p(1)$ by step 2, which equals $\E[\sigma_n^2] + \mathcal{B}_n + o_p(1)= \bar\sigma_n^2 + \mathcal{B}_n + o_p(1)$.

\qed

%----------------------------------------------------------------------
\section{Auxiliary Lemmas}
%----------------------------------------------------------------------

Throughout we let $N(i,r) = \abs{B(i,r) \cap \X_n}$.

%--------------------------------------
\subsection{Local Volumes}
%--------------------------------------

\begin{seclemma}\label{ldt}
  Let $\omega(x) = \text{Vol}(B(x,1) \cap \text{supp}(f))$ and $p_i = \int_{B(\tilde X_i, \beta_n)} f(z) \,\text{d}z$. If $\beta_n \to 0$, then
  \begin{equation}
    \max_i \left| \frac{N(i,r_n)}{n} - p_i \right| = o(1) \quad\text{a.s.}
    \label{hbc}
  \end{equation}

  \noindent and
  \begin{equation}
    \frac{p_i}{\beta_n^d} \longrightarrow \omega(\tilde{X}_i) f(\tilde X_i) > 0 \quad\text{a.s.} \label{lebdt}
  \end{equation}

  \noindent Furthermore, if $f$ is continuous on its support, then \eqref{lebdt} holds uniformly over $i$. 
\end{seclemma}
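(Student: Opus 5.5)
The proof separates into a uniform concentration statement, \eqref{hbc}, and a deterministic Lebesgue-differentiation statement, \eqref{lebdt}.

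\emph{Uniform concentration.} Since $N(i,r_n) = \abs{B(X_i,r_n)\cap\X_n}$ and $\norm{X_i-X_j}\le r_n$ if and only if $\norm{\tilde X_i-\tilde X_j}\le\beta_n$, we have
\[
N(i,r_n) = 1 + \sum_{j\neq i}\ind\{\tilde X_j\in B(\tilde X_i,\beta_n)\}.
\]
Conditional on $\tilde X_i$, the sum is $\text{Bin}(n-1,p_i)$. Rather than work pointwise, we would control
\[
\sup_{x\in\R^d}\Big|\frac1n\sum_{j=1}^n \ind\{\tilde X_j\in B(x,\beta_n)\} - \int_{B(x,\beta_n)}f\Big|.
\]
Balls in $\R^d$ form a VC class, so the Vapnik--Chervonenkis inequality bounds this supremum by $\epsilon$ with probability at least $1-8(n+1)^{d+1}e^{-n\epsilon^2/32}$. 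These probabilities are summable in $n$, and Borel--Cantelli then shows the supremum is $o(1)$ a.s. The diagonal term $1/n$ and the difference between $n$ and $n-1$ are $O(1/n)$. Together these give \eqref{hbc} uniformly in $i$. This is only an $o(1)$ statement, not $o(\beta_n^d)$, which is exactly what is claimed. If the later uses (e.g.\ \eqref{Norder}) needed relative precision, I would instead use a Bernstein bound with variance $p_i\lesssim\beta_n^d$ together with a union bound over $i$; this needs $n\beta_n^d\gg\log n$. I note this but would not pursue it here.

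\emph{Differentiation.} By the change of variables $z=\tilde X_i+\beta_n u$,
\[
\frac{p_i}{\beta_n^d} = \int_{B(0,1)} f(\tilde X_i+\beta_n u)\,\ind\{\tilde X_i+\beta_n u\in\text{supp}(f)\}\,\text{d}u.
\]
Fix a point $x$ of the support at which $f$ is continuous; by \autoref{adensity} this holds at every point of the support. Then the integrand equals $f(x)(1+o(1))$ on the set where the indicator is one, and $f$ is bounded, so
\[
\frac{p_i}{\beta_n^d} = f(x)\,\text{Vol}\big(\{u\in B(0,1): x+\beta_n u\in\text{supp}(f)\}\big) + o(1).
\]
The remaining volume is $\text{Vol}(B(0,1)\cap\beta_n^{-1}(\text{supp}(f)-x))$. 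For interior $x$ this converges to $\omega_d$. For boundary $x$ it converges to the volume of the unit ball intersected with the tangent cone. Here I would interpret $\omega(x)$ as this limiting local (blown-up) volume. The Lipschitz boundary condition (\autoref{dlipbd}) is what guarantees the limit exists and is positive: locally the support is the region above the graph of a Lipschitz function, and the rescaled region contains a fixed cone of positive aperture. This gives \eqref{lebdt} a.s., since $\tilde X_i$ lies a.s.\ in the support; the boundary has measure zero, so a.s.\ $\omega(\tilde X_i)=\omega_d$.

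\emph{Uniformity, the main obstacle.} We need the convergence in \eqref{lebdt} to be uniform over all $n$ sample points $\tilde X_i$, with $n$ growing. Continuity of $f$ on a compact support gives uniform continuity, which handles the factor $f(x+\beta_n u)\to f(x)$ uniformly. The volume factor does not converge uniformly near the boundary. For example, points at distance about $\beta_n$ from the boundary have ratio strictly between the half-ball and full-ball values. What the Lipschitz boundary gives uniformly is a positive lower bound: there is a cone condition with uniform aperture and height, so $\text{Vol}(B(0,1)\cap\beta_n^{-1}(\text{supp}(f)-x))\ge c>0$ for all $x$ in the support and all small $\beta_n$. So uniformly we obtain $p_i/\beta_n^d\asymp 1$, and exact uniform convergence holds if $\omega(\tilde X_i)$ is read as the $n$-dependent volume $\text{Vol}(B(0,1)\cap\beta_n^{-1}(\text{supp}(f)-\tilde X_i))$. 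This two-sided bound is what \eqref{Norder} and \autoref{anondeg} actually require.
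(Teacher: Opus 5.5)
Your argument is correct for the reading of $\omega$ that the statement actually needs. For \eqref{hbc} you take a different route from the paper. The paper conditions on $\tilde X_i$, uses $N(i,r_n)-1\sim\text{Bin}(n-1,p_i)$, applies Hoeffding, and pays a union-bound factor of only $n$, since only the $n$ sample centers matter. You instead bound $\sup_{x\in\R^d}$ of the empirical-minus-true mass of $B(x,\beta_n)$ with the Vapnik--Chervonenkis inequality. Your route gives uniformity over all centers and avoids conditioning on a random center that is itself a data point. In fact $N(i,r_n)/n$ is exactly the empirical measure of $B(\tilde X_i,\beta_n)$, so your $O(1/n)$ corrections are unnecessary. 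The price is importing VC theory, whereas the paper's argument is more elementary and suffices.

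Neither argument gives relative precision, and your side remark is accurate. Because $p_i\asymp\beta_n^d\to 0$, an additive $o(1)$ error does not by itself yield $\max_i N(i,r_n)=O(n\beta_n^d)$, which is how the lemma is used in \eqref{Norder}. That would need a Bernstein bound with variance of order $\beta_n^d$, together with $n\beta_n^d\gtrsim\log n$.

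For \eqref{lebdt}, your uniform argument is essentially the paper's \eqref{ldtunif}. Your reinterpretation of $\omega$ is necessary, not optional. With the literal $\omega(x)=\text{Vol}(B(x,1)\cap\text{supp}(f))$, the pointwise claim fails at interior points with $\text{Vol}(B(x,1)\setminus\text{supp}(f))>0$, because the limit there is $\omega_d f(x)$. Likewise, the second inequality in \eqref{ldtunif} holds only if $\omega(x)\beta_n^d=\text{Vol}(B(x,\beta_n)\cap\text{supp}(f))$, which is exactly your $n$-dependent volume.

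Two small points.
\begin{enumerate}[(i)]
\item The paper gets the pointwise limit from the Lebesgue differentiation theorem, which needs no continuity; this matches continuity appearing only in the ``furthermore'' clause. You use continuity via \autoref{adensity}, which is harmless in context. You can drop it by working at Lebesgue points of $f$, which gives $\omega_d f(\tilde X_i)$ a.s.\ directly.
\item Your claim that the Lipschitz condition makes the blown-up volume converge at boundary points is false in general. For example, the subgraph in $\R^2$ of the Lipschitz function $t\mapsto\abs{t}\sin(\log\abs{t})$ has blown-up volumes at the origin that oscillate. You never use this claim, though, since the boundary is Lebesgue-null.
\end{enumerate}
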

\begin{proof}
  Since $N(i,r_n) \sim \mathrm{Bin}(n-1,p_i) + 1$, by the union bound and Hoeffding's inequality, 
  \begin{equation*}
    \prob\left( \max_i \abs{N(i,r_n) - (n-1)p_i - 1} \geq \varepsilon n \right) \leq 2n\cdot \text{exp}\left\{ -\frac{2\varepsilon^2 n^2}{n-1} \right\}.
  \end{equation*}

  \noindent Summing the right-hand side from $n=1$ to $\infty$, the result is finite, so we obtain \eqref{hbc} from Borel-Cantelli.

  Since $f$ is a density function, it is $L_1$ integrable, and $f(\tilde X_i)>0$ a.s. Given $\beta_n\to 0$,
  \begin{equation*}
    \frac{p_i}{\beta_n^d} = \beta_n^{-d} \int_{B(\tilde{X}_i,\beta_n)\cap\mathrm{supp}(f)} f(z) \,\text{d}z \stackrel{a.s.}\longrightarrow \omega(\tilde{X}_i) f(\tilde{X}_i)
  \end{equation*}

  \noindent by the Lebesgue differentiation theorem.

  Finally, since $f$ is continuous on its support, and its support is compact, being a subset of the bounded region, it follows that $f$ is uniformly continuous on its support, so
  \begin{multline}
    \max_i \bigg| \frac{1}{\omega(\tilde{X}_i) \beta_n^d} p_i - f(\tilde{X}_i) \bigg| \leq \sup_{x\in\text{supp}(f)} \bigg| \frac{1}{\omega(x) \beta_n^d} \int_{B(x, \beta_n) \cap \text{supp}(f)} f(z) \,\text{d}z - f(x) \bigg| \\
    \leq \sup_{x\in\text{supp}(f)} \sup_{z \in B(x, \beta_n) \cap \text{supp}(f)} \big| f(z) - f(x) \big| = o(1) \quad\text{a.s.} \label{ldtunif} 
  \end{multline}
\end{proof}

\begin{seclemma}\label{lNm}
  Under \autoref{adensity}, if $\beta_n \to 0$, then 
  \begin{equation*}
    \max_{i,j\colon \norm{X_i-X_j} > r_n} \frac{N(i,r_n)}{N(i,r_n) - \abs{B(X_i,r_n) \cap B(X_j,r_n) \cap \X_n}} = O(1) \quad\text{a.s.}
  \end{equation*}
\end{seclemma}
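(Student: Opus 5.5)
The goal is to show that the ratio $N(i,r_n)/(N(i,r_n)-m_{ij})$ is uniformly bounded, where $m_{ij}=\abs{B(X_i,r_n)\cap B(X_j,r_n)\cap \X_n}$. The plan is to bound $N(i,r_n)$ from above by a multiple of $n\beta_n^d$, and the non-overlap count $N(i,r_n)-m_{ij}$ from below by a multiple of $n\beta_n^d$, both uniformly over the relevant pairs. The upper bound is \eqref{Norder}, which follows from \autoref{ldt} and boundedness of $f$. The numerator is therefore $O(n\beta_n^d)$ a.s. uniformly in $i$, and all the work lies in the denominator.

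For the denominator, rescale by $\lambda_n$. The condition $\norm{X_i-X_j}>r_n$ becomes $\norm{\tilde X_i-\tilde X_j}>\beta_n$. The quantity $N(i,r_n)-m_{ij}$ counts proto-sites in the set $A_{ij}=B(\tilde X_i,\beta_n)\setminus B(\tilde X_j,\beta_n)$. When the centers are at distance greater than $\beta_n$, $A_{ij}$ contains a ball of radius $c\beta_n$ inside $B(\tilde X_i,\beta_n)$. Concretely, take the ball centered at $\tilde X_i + (\beta_n/2)u$ of radius $\beta_n/4$, where $u$ is the unit vector pointing from $\tilde X_j$ to $\tilde X_i$. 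Every point $y$ of this ball satisfies $\norm{y-\tilde X_j}\ge \norm{\tilde X_i-\tilde X_j}+\beta_n/4>\beta_n$, so the ball misses $B(\tilde X_j,\beta_n)$. The difficulty is that this small ball must carry $F$-mass of order $\beta_n^d$, which requires it to meet $\mathrm{supp}(f)$ in a set of volume proportional to $\beta_n^d$. Near the boundary, the direction $u$ could point out of the support. To handle this I would instead use the Lipschitz-boundary part of \autoref{adensity}. It yields a uniform interior cone condition: there exist $\kappa>0$ and $h_0>0$ such that every $x\in\mathrm{supp}(f)$ is the vertex of a cone of fixed aperture and height $h_0$ contained in the support. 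From this, for all small $s$ and all $x$ in the support, $B(x,s)\cap \mathrm{supp}(f)$ contains a ball of radius $\kappa s$. I would then cover $B(\tilde X_i,\beta_n)\cap\mathrm{supp}(f)$ by finitely many (a dimension-only number $K$ of) subregions or cone sectors. At least one of these lies in the half-space facing away from $\tilde X_j$, has volume $\ge c\beta_n^d$, and lies inside $A_{ij}$. The main obstacle is exactly this geometric step: ensuring that the portion of the support in the part of $B(\tilde X_i,\beta_n)$ away from $\tilde X_j$ has volume comparable to $\beta_n^d$, uniformly in the configuration. If a direct argument fails, a fallback is to shrink the cone aperture. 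The interior cone at $\tilde X_i$ has a fixed angle, and either it or a rotated sub-cone inside it makes an angle of at most $\pi/2$ with $u$. Truncating that sub-cone at height $\beta_n/2$ gives a subset of $A_{ij}$ of volume $\ge c\beta_n^d$.

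Once $\int_{A}f\ge c' \beta_n^d$ holds for a set $A\subseteq A_{ij}$ (using that $f$ is bounded below), the empirical count must be converted to a lower bound. To avoid working with data-dependent sets, fix a deterministic grid of cubes of side $\epsilon\beta_n$, with $\epsilon$ small depending only on $\kappa$ and $d$. The geometric region above then contains a full grid cube intersected with the support of volume $\ge c\beta_n^d$, and the number of grid cells is $O(\beta_n^{-d})$, which is polynomial in $n$. The count in each cell is Binomial with mean $\ge c n\beta_n^d$. A union bound with Hoeffding's inequality, exactly as in the proof of \autoref{ldt}, then shows that every cell contains at least $c n\beta_n^d/2$ points eventually a.s. This requires $n\beta_n^d$ to dominate $\log n$; Hoeffding's form in \autoref{ldt} only gives $o(n)$ errors. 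If only $n\beta_n^d\to\infty$ is available, I would use a Bernstein or Chernoff lower-tail bound $\exp(-cn\beta_n^d)$ times the number of cells, and adopt a growth condition as in \autoref{tlimvar} if needed. Combining the two bounds gives the ratio $O(n\beta_n^d)/(c n\beta_n^d)=O(1)$ a.s., uniformly over pairs.
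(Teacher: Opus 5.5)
The gap is the geometric step you flag yourself, and it cannot be closed.

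Your fallback assumes that the interior cone at $\tilde X_i$ contains a sub-cone making an angle of at most $\pi/2$ with $u$. But a sub-cone's axis stays within half the aperture of the cone's axis, and nothing in \autoref{adensity} prevents every cone at $\tilde X_i$ from pointing toward $\tilde X_j$. In fact the bound you need, $\mathrm{Vol}\big((B(\tilde X_i,\beta_n)\setminus B(\tilde X_j,\beta_n))\cap\mathrm{supp}(f)\big)\geq c\beta_n^d$ uniformly over $\norm{\tilde X_i-\tilde X_j}>\beta_n$, is false. Take $f$ uniform on $[-1,1]^2$ (the simulation design, which satisfies \autoref{adensity}). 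Put $\tilde X_i$ at a corner and $\tilde X_j$ on the corner's bisector at distance $s=(1+\epsilon)\beta_n$. Every direction from $\tilde X_i$ into the support makes an angle $\phi\leq\pi/4$ with $\tilde X_j-\tilde X_i$, i.e.\ at least $3\pi/4$ with your $u$. So a support point $y$ with $\norm{y-\tilde X_i}=t\leq\beta_n$ satisfies $\norm{y-\tilde X_j}^2=t^2+s^2-2ts\cos\phi\leq t^2+s^2-\sqrt{2}\,ts<\beta_n^2$ whenever $2\epsilon\beta_n<t\leq\beta_n$ and $\epsilon$ is small. Thus the quarter-disc $B(\tilde X_i,\beta_n)\cap\mathrm{supp}(f)$ lies in $B(\tilde X_j,\beta_n)$ except for a sliver of area $O(\epsilon^2\beta_n^2)$ at $\tilde X_i$. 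The same holds, with other constants, for $\tilde X_i$ within $\epsilon\beta_n$ of the corner and $\tilde X_j-\tilde X_i$ within a small fixed angle of the bisector. Since $n\beta_n^d\to\infty$, such pairs exist with probability tending to one for every fixed $\epsilon$. For them, $N(i,r_n)-m_{ij}$ is at most of order $n\epsilon^2\beta_n^2$, while $N(i,r_n)$ is of order $n\beta_n^2$. Hence the maximum in the lemma is a.s.\ unbounded; in $d=2$, any corner of opening angle at most $2\pi/3$ produces this. No choice of sub-regions, cone sectors or grid cells can rescue your lower bound on the denominator.

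The paper's proof breaks at the same point. It bounds $p_{ij}/p_i$ away from one using $c_{x,x'}\leq c_x$, which fails in exactly this configuration: $c_x=1/4$ at the corner, whereas $c_{x,x'}\approx(\pi/4)/v_2(1)\approx0.64$. A correct version needs a stronger boundary condition, e.g.\ a $C^1$ boundary, where the worst case is a half-ball. Even then your half-space construction must change. The half-space facing away from $\tilde X_j$ meets the support only in a null set when $\tilde X_i$ lies on a flat piece of boundary and $\tilde X_j$ lies along the inward normal. The volume must instead be found among directions making an angle of more than about $\pi/3$ with $\tilde X_j-\tilde X_i$ (for instance at distances close to $\beta_n$).

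Your remark on \eqref{hbc} is correct, and it applies to the numerator as well. \eqref{hbc} controls $N(i,r_n)$ only up to $o(n)$, which says nothing relative to $n\beta_n^d$. So \eqref{Norder}, on which your numerator bound rests, needs the same Chernoff-type argument and a condition such as $n\beta_n^d\gtrsim\log n$. This is an added hypothesis, not a consequence of $\beta_n\to0$.

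Finally, if what you need downstream is \autoref{lprod}, you can bypass this lemma. Condition on $\X_n$ and take sites $s_1,\ldots,s_k$ at pairwise distances exceeding $r_n$. The probability that $U_{s_1}<\cdots<U_{s_k}$ and every $s_m\in\M_{r_n}$ equals $\prod_{m=1}^k\abs{\cup_{l\geq m}B(X_{s_l},r_n)\cap\X_n}^{-1}\leq\prod_{m=1}^kN(s_m,r_n)^{-1}$. Summing over the $k!$ orderings gives $\prod_{s\in S}N_s\,\E[\prod_{s\in S}M_s\mid\X_n]\leq\abs{S}!$.
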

\begin{proof}
  Let $m_{ij} = \abs{B(X_i,r_n) \cap B(X_j,r_n) \cap \X_n}$ and
  \begin{equation*}
    p_{ij} = \int_{B(\tilde X_i, \beta_n) \cap B(\tilde X_j, \beta_n)} f(z)\,\text{d}z.
  \end{equation*}

  \noindent Note that $N(i,r_n) - m_{ij}$ is always strictly positive given $\norm{X_i-X_j} > r_n$. By the same argument as \eqref{hbc}, $\max_{i,j} \abs{m_{ij}/n - p_{ij}} = o(1)$ a.s. Using \eqref{hbc},
  \begin{equation*}
    \max_{i,j} \frac{N(i,r_n)}{N(i,r_n) - m_{ij}} = \max_{i,j} \frac{1}{1 - p_{ij} / p_i} + o(1) \quad\text{a.s.}
  \end{equation*}

  \noindent for $p_i$ in \eqref{hbc}. 

  It remains to show that $p_{ij} / p_i$ is uniformly bounded below one. If $\norm{\tilde{X}_i-\tilde{X}_j} \geq 2\beta_n$, then $p_{ij}=0$, so it remains to consider the case $\norm{\tilde{X}_i-\tilde{X}_j} \in (\beta_n,2\beta_n)$.

  By \autoref{adensity}, $\text{supp}(f)$ has a Lipschitz boundary, so by 4.11 of \cite{adams2003sobolev}, it satisfies the cone condition (\autoref{dcc}) for some finite cone $C(\rho, v, \kappa)$. Then for $n$ sufficiently large and some rotation $\bm{R}_x\colon \R^d\rightarrow\R^d$, $B(x,\beta_n) \cap \text{supp}(f)$ contains the cone $x + \bm{R}_x(C(\beta_n v, \kappa))$ whose volume is a fixed fraction $c_0$ of $\text{Vol}(B(x,\beta_n))$ for $c_0 \in (0,1]$ independent of $x,n$. Hence
  \begin{equation*}
    0 < c_0 \leq c_x \equiv \frac{\text{Vol}(B(x, \beta_n) \cap \text{supp}(f))}{\text{Vol}(B(x, \beta_n))}.
  \end{equation*}

  \noindent Also let
  \begin{equation*}
    c_{x,x'} = \frac{\text{Vol}(B(x, \beta_n) \cap B(x', \beta_n) \cap \text{supp}(f))}{\text{Vol}(B(x, \beta_n) \cap B(x', \beta_n))},
  \end{equation*}

  \noindent and note that $c_{x,x'} \leq c_x$.

  Define $v(x,y) = \text{Vol}(B(0,1)\cap B(y-x,1))$, so $v(x,y) \in (0,\omega_d)$ when $\norm{x-y} \in (0,2)$ and $v(x,y) = 0$ when $\norm{x-y} \geq 2$. By the change of variables $z \mapsto (z-x)/\beta_n$,
  \begin{equation*}
    \text{Vol}(B(x,\beta_n) \cap B(x',\beta_n)) = \beta_n^d\, v(0,(x'-x)/\beta_n).
  \end{equation*}

  \noindent Since $f$ is continuous on its support by \autoref{adensity}, \eqref{ldtunif} holds, and by a similar argument,
  \begin{multline}
    \max_{i,j\colon \norm{\tilde{X}_i - \tilde{X}_j} \in (\beta_n,2\beta_n)} \bigg| \frac{1}{c_{x,x'} \beta_n^d v(0,(\tilde{X}_j-\tilde{X}_i)/\beta_n)} p_{ij} - f(\tilde{X}_i) \bigg| \\ \leq \sup_{\substack{x,x'\in\text{supp}(f)\colon \\ \norm{x-x'}\in (\beta_n,2\beta_n)}} \bigg| \frac{1}{c_{x,x'} \beta_n^d v(0,(x'-x)/\beta_n)} \int_{B(x, \beta_n) \cap B(x', \beta_n) \cap \text{supp}(f)} f(z) \,\text{d}z - f(x) \bigg| \\
    \leq \sup_{\substack{x,x'\in\text{supp}(f)\colon \\ \norm{x-x'}\in (\beta_n,2\beta_n)}} \sup_{z \in B(x, \beta_n) \cap B(x', \beta_n) \cap \text{supp}(f)} \big| f(z) - f(x) \big| = o(1) \label{intersectunif}
  \end{multline}

  \noindent Therefore,
  \begin{equation*}
    \frac{p_{ij}}{p_i} = \frac{c_{x,x'} \beta_n^d v(0,(\tilde{X}_j-\tilde{X}_i)/\beta_n) f(\tilde{X}_i)}{c_x \omega_d \beta_n^d f(\tilde{X}_i)} + o(1) \leq \frac{v(0,(\tilde{X}_j-\tilde{X}_i)/\beta_n)}{\omega_d} + o(1)
  \end{equation*}

  \noindent a.s.\ uniformly over $i,j$ such that $\norm{\tilde{X}_i - \tilde{X}_j} \in (\beta_n, 2\beta_n)$. Under the latter event, $\norm{(\tilde{X}_j-\tilde{X}_i)/\beta_n} \in (1,2)$, so $v(0,(\tilde{X}_j-\tilde{X}_i)/\beta_n) < \omega_d$, meaning $p_{ij}/p_i$ is asymptotically bounded strictly below one, uniformly over such $i,j$, as desired.
\end{proof}

%--------------------------------------
\subsection{Moments}
%--------------------------------------

\begin{seclemma}\label{lprod}
  Let $M_i = \ind\{i \in \M_{r_n}\}$ and $N_i = N(i,r_n)$. Then $\max\{ \prod_{s \in S} N_s \E[\prod_{s\in S} M_s \mid \X_n]\colon S \subseteq \{1,\ldots,n\}, \abs{S}\leq 4\} = O(1)$ a.s.
\end{seclemma}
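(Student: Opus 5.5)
Conditional on $\X_n$, the marks $U_i$ are i.i.d.\ uniform, so $\E[\prod_{s\in S} M_s \mid \X_n]$ is just the probability that every $s\in S$ has the smallest mark in its own $r_n$-ball. The plan is to compute or bound this probability by a combinatorial ordering argument, treating separately the pairs of $S$ that are far apart and those that are close.

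The first reduction is that if some pair $s\neq s'$ in $S$ has $\norm{X_s-X_{s'}}\le r_n$, then $M_sM_{s'}=0$ almost surely. Each of the two lies in the other's ball, and ties occur with probability zero. The product is then zero, so we may assume all pairwise distances in $S$ exceed $r_n$.

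For such $S$, let $B_S=\bigcup_{s\in S}B(X_s,r_n)\cap\X_n$. The event $\prod_s M_s=1$ depends only on the relative order of the marks within $B_S$. Since that order is uniform, the probability is determined by the overlap structure of the balls. I would bound it through a sequential ordering argument. Reveal the site with the smallest mark in $B_S$; it must be one of the $s\in S$. Conditional on that, the event for the remaining sites concerns the marks in their balls with that winner's ball removed. Iterating this over orderings of $S$ gives
\begin{equation*}
  \E\Big[\prod_{s\in S}M_s \,\Big|\, \X_n\Big] \le \sum_{\pi} \prod_{k=1}^{\abs{S}} \frac{1}{\abs{B(X_{\pi(k)},r_n)\cap\X_n \setminus \bigcup_{m<k} B(X_{\pi(m)},r_n)}},
\end{equation*}
where the sum runs over the at most $4!$ orderings $\pi$ of $S$. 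In the $k$-th step, conditionally on the smaller marks lying in the earlier balls, the smallest mark among the remaining points of $B(X_{\pi(k)},r_n)$ must belong to $\pi(k)$, which bounds that step's probability by the reciprocal above. Multiplying by $\prod_s N_s$, it then suffices to show that
\begin{equation*}
  \frac{N(\pi(k),r_n)}{N(\pi(k),r_n)-\abs{B(X_{\pi(k)},r_n)\cap\bigcup_{m<k}B(X_{\pi(m)},r_n)\cap\X_n}}
\end{equation*}
is $O(1)$ uniformly.

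This last step is the main obstacle, because Lemma \ref{lNm} only handles removing a single other ball, whereas here up to three balls must be removed. I would handle this with the argument in the proof of Lemma \ref{lNm}. Using \eqref{hbc}, replace the counts by $n$ times the $f$-masses. By \eqref{lebdt} and the cone condition, the mass of $B(\tilde X_i,\beta_n)$ is of order $\beta_n^d f(\tilde X_i)$. The key geometric fact is that if the centers $x_1,\dots,x_3$ are each at distance greater than $\beta_n$ from $x$, then $B(x,\beta_n)\setminus\bigcup_m B(x_m,\beta_n)$ still contains a ball of radius $c\beta_n$ around a point near $x$, for a fixed $c>0$. In fact, the ball $B(x,c\beta_n)$ itself is disjoint from the other balls once $c<$ the separation margin, and since all distances exceed $\beta_n$ we can take $c$ small, e.g.\ $B(x,\epsilon\beta_n)$ with $\norm{x-x_m}>\beta_n$ forcing only a neighbourhood of radius $\norm{x-x_m}-\beta_n$, which can be small. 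To fix this, I would instead use a direction argument: the at most three excluded balls each remove a cap. The cone condition supplies a cone in $B(x,\beta_n)\cap\mathrm{supp}(f)$, and a fixed-fraction sub-cone avoids the three caps, since each cap is contained in a half-space pointing toward $x_m$. Bounding this retained volume below by a fixed fraction of $\beta_n^d$, and using that $f$ is bounded away from zero, gives the uniform $O(1)$ bound. This holds almost surely uniformly over $S$ with $\abs{S}\le 4$, since finitely many orderings are involved.
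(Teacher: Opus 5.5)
\textbf{There is a genuine gap.} Your sequential bound is a valid upper bound, but it is lossy. The step you flag as the main obstacle cannot be closed, because the reduction it relies on is false.

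Since you sum over all orderings, you need $N_{\pi(k)}/\abs{C_k^\pi}$ bounded for every $\pi$ and $k$, where $C_k^\pi=B(X_{\pi(k)},r_n)\cap\X_n\setminus\bigcup_{m<k}B(X_{\pi(m)},r_n)$. Here is a counterexample.
\begin{itemize}
\item[(i)] Take $d=2$, a site $x$, and sites $x_m=x+(1+\epsilon)r_n u_m$, $m=1,2,3$, with unit vectors $u_m$ at mutual angles $120^\circ$. All pairwise distances exceed $r_n$.
\item[(ii)] Every unit vector $v$ has $\langle v,u_m\rangle\geq 1/2$ for some $m$. Then $\norm{x+sv-x_m}\leq r_n$ whenever $s^2-(1+\epsilon)r_n s+(2\epsilon+\epsilon^2)r_n^2\leq 0$, i.e.\ for $s$ roughly in $[2\epsilon r_n,(1-\epsilon)r_n]$.
\item[(iii)] So the three balls cover all but an $O(\epsilon)$ fraction of $B(x,r_n)$. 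The three half-spaces ``pointing toward $x_m$'' cover every direction, so no sub-cone survives.
\item[(iv)] Because $n\beta_n^d\to\infty$, such configurations with $\epsilon\to 0$ appear in the sample. The ordering with $x$ last then has a factor $N(x,r_n)/\abs{C_4^\pi}$ that is unbounded. It is as large as $N(x,r_n)\asymp n\beta_n^d$ once no sample point other than $x$ is left uncovered.
\end{itemize}
The same happens for $\abs{S}=3$ when $d=1$. Even your $k=2$ factor fails near a corner of $\mathrm{supp}(f)$, such as the square in the simulations. There, one ball centred just beyond $r_n$ along the diagonal covers almost all of $B(x,r_n)\cap\mathrm{supp}(f)$, so the single-overlap bound of Lemma \ref{lNm} is not uniform there either.

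The paper's own argument for $\abs{S}\in\{3,4\}$ has the same weakness in cruder form. It keeps only the points exclusive to each ball and cites Lemma \ref{lNm} for $N_s/(N_s-m_{sp}-m_{sq}-m_{sr})$. But that lemma controls one overlap only, and in the configuration above the three overlaps (each about $39\%$ of the disc) sum to more than $N(x,r_n)$. So you cannot close your gap by appealing to the paper's route.

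\textbf{The fix.} Your reveal-the-minimum idea is the right one; the loss comes from discarding the winner's ball. By your first reduction, $\pi(k)\notin B(X_{\pi(m)},r_n)$ for $m\neq k$. Let $T_k^\pi=\bigcup_{m\geq k}B(X_{\pi(m)},r_n)\cap\X_n$. Then $\{\prod_{s\in S}M_s=1\}\cap\{U_{\pi(1)}<\cdots<U_{\pi(\abs{S})}\}$ is exactly the event that, for every $k$, $\pi(k)$ has the smallest mark in $T_k^\pi$. These sets are nested, with $\pi(k)\in T_k^\pi\setminus T_{k+1}^\pi$. Conditional on $\pi(1)$ holding the minimum of $T_1^\pi$, the relative order of the marks in $T_1^\pi\setminus\{\pi(1)\}\supseteq T_2^\pi$ is still uniform; you should not drop the points of $B(X_{\pi(1)},r_n)$. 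Iterating gives the exact identity
\begin{equation*}
  \E\Big[\prod_{s\in S}M_s \,\Big|\, \X_n\Big]=\sum_{\pi}\prod_{k=1}^{\abs{S}}\frac{1}{\abs{T_k^\pi}}.
\end{equation*}
Since $\abs{T_k^\pi}\geq N_{\pi(k)}$, it follows that $\prod_{s\in S}N_s\,\E[\prod_{s\in S}M_s\mid\X_n]\leq\abs{S}!\leq 24$ for every realization of $\X_n$. This reproduces the paper's exact formula \eqref{Se2} when $\abs{S}=2$. It needs neither Lemma \ref{lNm} nor the cone condition.
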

\begin{proof}
  {\bf Case $\abs{S}=1$.} Take $S = \{i\}$. Then $N_i \E[M_i \mid \X_n]= N_i \cdot N_i^{-1}=1$.

  \bigskip
  \noindent {\bf Case $\abs{S}=2$.} Take $S=\{i,j\}$. If $\norm{X_i-X_j}\leq r_n$, then $M_i M_j=0$ a.s. If $\norm{X_i-X_j} > 2r_n$, then $M_i \indep M_j \mid \X_n$, so $N_i N_j\,\E[M_i M_j \mid \X_n] = 1$. If $\norm{X_i-X_j} \in (r_n,2r_n]$, then both $i \in \M_{r_n}$ and $j \in \M_{r_n}$ if and only if the other locations in the $i$ and $j$'s $r_n$-balls all have larger marks than $U_i,U_j$. That is,
  \begin{multline*}
    E \equiv \big\{i\in\M_{r_n} \medcap j\in\M_{r_n}\big\} = \big\{U_i<U_k \ \forall k \text{ s.t. } X_k\in B(i,r_n)\setminus\{i\} \\ \medcap U_j<U_k \ \forall k \text{ s.t. } X_k\in B(j,r_n)\setminus\{j\}\big\}.
  \end{multline*}

  \noindent Define
  \begin{equation*}
    m_{ij} = \abs{B(X_i,r_n) \cap B(X_j,r_n) \cap \X_n}.
  \end{equation*}

  \noindent Because the marks are i.i.d.\ and continuously distributed, all $(N_i+N_j-m_{ij})!$ orderings of the marks in $B(i,r_n)\cup B(j,r_n)$ are equally likely. If $U_i<U_j$, then every point of $B(i,r_n)$ has mark larger than $U_i$, while every point of $B(j,r_n)$ has mark larger than $U_j>U_i$, so $U_i$ is the smallest mark. Hence 
  \begin{equation*}
    \prob\big(E \medcap U_i<U_j \mid \X_n\big) = \frac{1}{\binom{N_i+N_j-m_{ij}}{1}}\frac{1}{\binom{N_j}{1}},
  \end{equation*}
  
  \noindent and similarly for $U_j<U_i$, so
  \begin{equation*}
    \prob(E\mid \X_n) = \frac{1}{N_j (N_i+N_j-m_{ij})} + \frac{1}{N_i(N_i+N_j-m_{ij})} = \frac{N_i+N_j}{N_iN_j(N_i+N_j-m_{ij})}.
  \end{equation*}
  
  \noindent Therefore,
  \begin{equation}
    N_i N_j\,\E[M_i M_j\mid\X_n] = \frac{N_i+N_j}{N_i+N_j-m_{ij}} \leq 2. \label{Se2}
  \end{equation}

  \noindent {\bf Case $\abs{S}=3$.} Take $S = \{i,j,k\}$. If the distance between some pair in $S$ is at most $r_n$, then $M_iM_jM_k=0$ a.s., so assume all pairwise distances exceed $r_n$. Let 
  \begin{equation*}
    m_{ijk} = \abs{B(X_i,r_n)\cap B(X_j,r_n)\cap B(X_k,r_n) \cap \X_n}.
  \end{equation*}

  \noindent For each $\{\ell,m\}\subseteq\{i,j,k\}$, define $\tilde{N}_\ell\equiv N_\ell-m_{ijk}$ and $\tilde{m}_{\ell m}\equiv m_{\ell m}-m_{ijk}$. Let $a_i'=\tilde{N}_i-1-\tilde{m}_{ij}-\tilde{m}_{ik}$, the number of locations exclusive to $B(i,r_n)$. Then $\int_0^1 (1{-}u)^{a_i'} \,\text{d}u$ is the chance that $i$ has the lowest mark of all such locations, and
  \begin{equation*}
    \prob(M_iM_jM_k=1\mid\X_n) \leq \prod_{\ell \in \{i,j,k\}} \int_0^1 (1 - u)^{a_\ell'} \,\text{d}u = \prod_{\ell \in \{i,j,k\}} (a_\ell'+1)^{-1} \quad\text{a.s.} 
  \end{equation*}

  \noindent Therefore,
  \begin{align*}
    N_iN_jN_k\,\prob(M_iM_jM_k=1\mid\X_n) &\leq \prod_{\substack{s \subseteq \{i,j,k\} \\ \{p,q\} \subseteq \{i,j,k\}\backslash\{s\}}} \frac{N_s}{\tilde{N}_s-\tilde{m}_{sp}-\tilde{m}_{sq}} \\
					  &\leq \prod_{\substack{s \subseteq \{i,j,k\} \\ \{p,q\} \subseteq \{i,j,k\}\backslash\{s\}}} \frac{N_s}{N_s-m_{sp}-m_{sq}},
  \end{align*}

  \noindent which is uniformly $O(1)$ a.s.\ by \autoref{lNm}.

  \bigskip
  \noindent {\bf Case $\abs{S}=4$.} Take $S = \{i,j,k,\ell\}$. If the distance between some pair in $S$ is at most $r_n$, then $M_iM_jM_kM_\ell=0$ a.s., so suppose all pairwise distances exceed $r_n$. Define $\tilde{N}_i\equiv N_i-m_{ijk}-m_{ij\ell}-m_{ik\ell}+2m_{ijk\ell}$, $\tilde{m}_{ij}\equiv m_{ij}-m_{ijk}-m_{ij\ell}+m_{ijk\ell}$, and $\tilde{a}_\ell=\tilde{N}_\ell-\tilde{m}_{i\ell}-\tilde{m}_{j\ell}-\tilde{m}_{k\ell}-1$. Then
  \begin{equation*}
    \prob(M_iM_jM_kM_\ell=1\mid\X_n) \leq \prod_{s \in \{i,j,k,\ell\}} \int_0^1 (1 - u)^{\tilde{a}_s} \,\text{d}u = \prod_{s \in \{i,j,k,\ell\}} \tilde{a}_s^{-1} \quad\text{a.s.}
  \end{equation*}

  \noindent Therefore
  \begin{multline*}
    N_iN_jN_kN_\ell\,\prob(M_iM_jM_kM_\ell=1\mid\X_n) \leq \prod_{\substack{s \in \{i,j,k,\ell\} \\ \{p,q,r\} \subseteq \{i,j,k,\ell\}\backslash\{s\}}} \frac{N_s}{\tilde{N}_s-\tilde{m}_{sp}-\tilde{m}_{sq}-\tilde{m}_{sr}} \\
    \leq \prod_{\substack{s \in \{i,j,k,\ell\} \\ \{p,q,r\} \subseteq \{i,j,k,\ell\}\backslash\{s\}}} \frac{N_s}{N_s-m_{sp}-m_{sq}-m_{sr}},
  \end{multline*}

  \noindent which is uniformly $O(1)$ a.s.\ by \autoref{lNm}.
\end{proof}

\begin{seclemma}\label{lovar}
  Under \autoref{aboundY}, both $\var(\sum_{i=1}^n Z_i^* \mid \X_n)$ and $\sum_{i=1}^n \var(Z_i^* \mid \X_n)$ are $O(n^2\beta_n^d)$ a.s.
\end{seclemma}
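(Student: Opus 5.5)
Conditional on $\X_n$, I will write $Z_i^* = M_i N_i a_i - \tau_i$, where $a_i = \mu(X_i,\{X_i\})D_i/p - \mu(X_i,\emptyset)(1-D_i)/(1-p)$ and $\tau_i = \mu(X_i,\{X_i\})-\mu(X_i,\emptyset)$. Since $\E[a_i \mid \X_n] = \tau_i$ and $D_i$ is independent of the marks and of $\X_n$, the conditional mean of $M_iN_ia_i$ is $\tau_i$, so $Z_i^*$ is conditionally mean zero. I will compute both $\var(Z_i^* \mid \X_n)$ and $\cov(Z_i^*,Z_j^* \mid \X_n)$ explicitly, using only \autoref{aboundY} (which makes $\mu$, and hence $a_i$ and $\tau_i$, uniformly bounded), \autoref{lprod}, and \eqref{Norder}.

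\textbf{Diagonal terms.} Using $M_i^2=M_i$ and $\E[M_i\mid\X_n]=N_i^{-1}$, I get $\var(Z_i^*\mid\X_n) = N_i\E[a_i^2] - \tau_i^2 \le C N_i$. Summing over $i$ and applying \eqref{Norder} gives $\sum_i \var(Z_i^*\mid\X_n) = n\cdot O(n\beta_n^d) = O(n^2\beta_n^d)$ a.s. This proves the second claim.

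\textbf{Off-diagonal terms.} For $i\neq j$ there are three cases.
\begin{itemize}
\item If $\norm{X_i-X_j}>2r_n$, then $M_i \indep M_j$ conditional on $\X_n$, so the covariance is zero.
\item If $\norm{X_i-X_j}\le r_n$, then $M_iM_j=0$, so the covariance equals $-\tau_i\tau_j = O(1)$.
\item If $\norm{X_i-X_j}\in(r_n,2r_n]$, independence of $D_i$ and $D_j$ gives $\E[Z_i^*Z_j^*\mid\X_n] = \tau_i\tau_j\big(N_iN_j\E[M_iM_j\mid\X_n]-1\big)$, which is at most $C$ in absolute value by \eqref{Se2}.
\end{itemize}
Hence $\abs{\cov(Z_i^*,Z_j^*\mid\X_n)} \le C\,A_{ij}$, and
\[
\sum_{i\neq j}\abs{\cov(Z_i^*,Z_j^*\mid\X_n)} \le C\sum_i N(i,2r_n) = O(n^2\beta_n^d)
\]
a.s., again by \eqref{Norder}. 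Adding the diagonal contribution proves the first claim.

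\textbf{Main obstacle.} The only delicate step is the middle-distance covariance, where the dependence between $M_i$ and $M_j$ is nontrivial. It is handled entirely by the exact bound \eqref{Se2} from \autoref{lprod}, so I expect no real difficulty. The remaining ingredient is the a.s. uniform bound $\max_i N(i,c\,r_n)=O(n\beta_n^d)$, which follows from \autoref{ldt} together with boundedness of $f$.
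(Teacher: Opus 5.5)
Your proposal is correct and follows essentially the same route as the paper: compute $\var(Z_i^*\mid\X_n)=N_i\E[a_i^2]-\tau_i^2$, split the covariances into the three distance regimes, control the middle regime with \eqref{Se2}, and sum using \eqref{Norder}. As you note, \eqref{Norder} needs boundedness of $f$ from \autoref{adensity} in addition to \autoref{aboundY}, and the paper's proof relies on it in the same way. The only difference is that the paper goes further and assembles the exact closed form \eqref{s2e}, which it reuses in the proof of \autoref{tlimvar}; your cruder bound $\abs{\cov(Z_i^*,Z_j^*\mid\X_n)}\le C A_{ij}$ is all the lemma itself requires.
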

\begin{proof}
  Observe that
  \begin{equation*}
    \var(Z_i^* \mid \X_n) = N(i,r_n) \left(\frac{\mu(X_i,\{X_i\})^2}{p} + \frac{\mu(X_i,\emptyset)^2}{1-p}\right) - \tau(X_i)^2,
  \end{equation*}

  \noindent which is $O(N(i,r_n))$ by \autoref{aboundY}, which is in turn uniformly $O(n\beta_n^d)$ a.s.\ by \eqref{Norder}.

  The covariance term is
  \begin{align*}
    \cov(Z_i^*, Z_j^* \mid \X_n) = \left(\frac{\prob(i \in \M_{r_n},\, j \in \M_{r_n} \mid \X_n)}{\rho_i\rho_j} - 1\right)\tau(X_i) \tau(X_j).
  \end{align*}

  \noindent If $\norm{X_i - X_j} \leq r_n$, then $\prob(i \in \M_{r_n}, j \in \M_{r_n} \mid \X_n) = 0$ and $\cov(Z_i^*, Z_j^* \mid \X_n) = -\tau(X_i) \tau(X_j)$. If $\norm{X_i - X_j} > 2r_n$, then $i \in \M_{r_n} \indep j \in \M_{r_n} \mid \X_n$ and $\cov(Z_i^*, Z_j^* \mid \X_n) = 0$. If $r_n < \norm{X_i - X_j} \leq 2r_n$, then by \eqref{Se2},
  \begin{equation*}
    \cov(Z_i^*, Z_j^* \mid \X_n) = \frac{m_{ij}}{N(i,r_n)+N(j,r_n)-m_{ij}}\,\tau(X_i) \tau(X_j)
  \end{equation*}

  \noindent for $m_{ij} \equiv \abs{B(X_i,r_n) \cap B(X_j,r_n) \cap \X_n}$.

  Using the identities
  \begin{align*}
    &\frac{\mu(X_i,\{X_i\})^2}{p} + \frac{\mu(X_i,\emptyset)^2}{1-p} = \frac{((1-p)\mu(X_i,\{X_i\}) + p\mu(X_i,\emptyset))^2}{p(1-p)} + \tau(X_i)^2 \quad\text{and} \\
    &\sum_{i=1}^n (N(i,r_n)-1) \tau(X_i)^2 - \sum_{\substack{j\neq i \\ \norm{X_i-X_j}\leq r_n}} \tau(X_i) \tau(X_j) = \sum_{\substack{i < j \\ \norm{X_i-X_j}\leq r_n}} (\tau(X_i) - \tau(X_j))^2,
  \end{align*}

  \noindent the previous derivations yield
  \begin{multline}
    \var\left(\sum_{i=1}^n Z_i^* \,\bigg|\, \X_n\right) = \frac{1}{p(1-p)} \sum_{i=1}^n N(i,r_n) \big((1-p)\mu(X_i,\{X_i\}) + p\mu(X_i,\emptyset)\big)^2 \\ + \sum_{\substack{i < j \\ \norm{X_i-X_j}\leq r_n}} (\tau(X_i) - \tau(X_j))^2 + \sum_{\substack{j \neq i \\ r_n < \norm{X_i-X_j} \leq 2r_n}} \frac{m_{ij}}{N(i,r_n)+N(j,r_n)-m_{ij}} \,\tau(X_i) \tau(X_j). \label{s2e}
  \end{multline}

  \noindent The first two terms are non-negative, while the third can have either sign. Also $m_{ij}/(N(i,r_n)+N(j,r_n)-m_{ij}) \leq 1$ since $m_{ij} \leq \min\{N(i,r_n), N(j,r_n)\}$. Then by \autoref{aboundY} and \eqref{Norder}, the right-hand side is $O(n^2\beta_n^d)$ a.s. 
\end{proof}

\begin{seclemma}\label{lstein}
  Let $\{Z_i\}_{i=1}^n$ be random variables with dependency graph $\bm{A}$ such that $\E[Z_i^4]<\infty$ and $\E[Z_i]=0$.\footnote{For a graph $\bm{A}$, let $A_{ij}$ be an indicator for whether nodes $i,j$ are connected in $\bm{A}$. We say $\bm{A}$ is a {\em dependency graph} for $\{Z_i\}_{i=1}^n$ if for any $S_1,S_2 \subseteq \{1,\ldots,n\}$ such that $A_{k\ell}=0$ for all $k \in S_1, \ell \in S_2$, we have $\{Z_k\}_{k \in S_1} \indep \{Z_\ell\}_{\ell \in S_2}$.} Define $\sigma^2 \equiv \var(\sum_{i=1}^n Z_i)$ and $\mathcal{W} = \sum_{i=1}^n Z_i/\sigma$. For $\mathcal{Z} \sim \mathcal{N}(0,1)$,
  \begin{equation}
    d(\mathcal{W},\mathcal{Z}) \leq \frac{1}{\sigma^3} \sum_{i=1}^n \sum_{j=1}^n \sum_{k=1}^n \E[\abs{Z_iZ_jZ_k}] A_{ij} A_{ik} + \frac{\sqrt{2}}{\sqrt{\pi}\sigma^2} \var\left( \sum_{i=1}^n \sum_{j=1}^n Z_iZ_j A_{ij} \right)^{1/2}, \label{wass2}
  \end{equation}

  \noindent where $d(\cdot,\cdot)$ is the Wasserstein distance.
\end{seclemma}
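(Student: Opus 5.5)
We will prove \autoref{lstein} by Stein's method, following the standard argument for dependency graphs (as in \cite{ross2011fundamentals}, Theorem 3.5/3.6). The difference is that we stop before the final Cauchy--Schwarz and degree bounds are applied, since those are what produce the looser bound.

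\emph{Setup.} For Wasserstein distance it suffices to bound $\abs{\E[f'(\mathcal{W}) - \mathcal{W}f(\mathcal{W})]}$ uniformly over solutions $f$ of the Stein equation for $1$-Lipschitz test functions $h$. These solutions satisfy $\norm{f}_\infty\le 2$, $\norm{f'}_\infty\le \sqrt{2/\pi}$ and $\norm{f''}_\infty\le 2$. Write $X_i = Z_i/\sigma$, so $\mathcal{W}=\sum_i X_i$. Let $N_i=\{j\colon A_{ij}=1\}$, which includes $i$, and set $V_i=\sum_{j\in N_i}X_j$. By the dependency-graph property, $X_i\indep \mathcal{W}-V_i$, so $\E[X_i f(\mathcal{W}-V_i)]=\E[X_i]\E[f(\mathcal{W}-V_i)]=0$.

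\emph{Decomposition.} We write
\begin{equation*}
  \E[\mathcal{W}f(\mathcal{W})]=\sum_i \E\big[X_i\big(f(\mathcal{W})-f(\mathcal{W}-V_i)-V_if'(\mathcal{W})\big)\big]+\E\Big[f'(\mathcal{W})\sum_i X_iV_i\Big].
\end{equation*}
Taylor's theorem gives $\abs{f(\mathcal{W})-f(\mathcal{W}-V_i)-V_if'(\mathcal{W})}\le \tfrac12\norm{f''}_\infty V_i^2$. So the first sum is at most $\sum_i\E[\abs{X_i}V_i^2]$. Expanding $V_i^2=\sum_{j,k}X_jX_kA_{ij}A_{ik}$ bounds it by $\sigma^{-3}\sum_{i,j,k}\E\abs{Z_iZ_jZ_k}A_{ij}A_{ik}$, which is the first term of \eqref{wass2}.

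For the remaining piece, note that $\sum_i X_iV_i=\sigma^{-2}\sum_{i,j}Z_iZ_jA_{ij}$. The dependency-graph property gives $\E[Z_iZ_j]=0$ whenever $A_{ij}=0$, so $\E[\sum_iX_iV_i]=\sigma^{-2}\var(\sum_iZ_i)=1$. Hence
\begin{equation*}
  \abs{\E[f'(\mathcal{W})]-\E[f'(\mathcal{W})\textstyle\sum_iX_iV_i]}=\abs{\E[f'(\mathcal{W})(1-\textstyle\sum_iX_iV_i)]}\le \norm{f'}_\infty\,\E\abs{1-\textstyle\sum_iX_iV_i}.
\end{equation*}
By Jensen/Cauchy--Schwarz, this is at most $\sqrt{2/\pi}\,\sigma^{-2}\var(\sum_{i,j}Z_iZ_jA_{ij})^{1/2}$, which is the second term of \eqref{wass2}.

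\emph{Loose ends.} Finite fourth moments make every expectation above finite, including the variance of the quadratic form. The only delicate points are bookkeeping. First, the diagonal $i=j$ must be included, via the convention $A_{ii}=1$. Second, the constants $\norm{f''}_\infty\le 2$ and $\norm{f'}_\infty\le\sqrt{2/\pi}$ must be tracked so that the stated coefficients $1$ and $\sqrt2/\sqrt\pi$ come out exactly. I expect the main obstacle to be this constant tracking together with the centering identity $\E[\sum_iX_iV_i]=1$. There is no genuine analytic difficulty beyond the standard argument: the point is only to not bound $\var(\sum Z_iZ_jA_{ij})$ further by degree-based fourth-moment bounds.
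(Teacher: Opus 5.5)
Your proposal is correct and is exactly the argument the paper invokes. The paper's proof just cites Ross's Theorem 3.1, which supplies the Stein-solution bounds $\norm{f}_\infty,\norm{f''}_\infty\le 2$ and $\norm{f'}_\infty\le\sqrt{2/\pi}$, together with the intermediate inequalities (3.9)--(3.12) from his dependency-neighborhood proof, stopping before the degree-based bounds; that is precisely your decomposition, Taylor step, and centering identity $\E[\sum_i X_iV_i]=1$. Your observation that the convention $A_{ii}=1$ is needed, both for $Z_i \indep \sum_{j\colon A_{ij}=0} Z_j$ and for the diagonal to enter the quadratic form, is a worthwhile clarification, since the paper's footnote leaves it implicit.
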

\begin{proof}
  This follows from \cite{ross2011fundamentals} Theorem 3.1 and the bounds in his equations (3.9)--(3.12).
\end{proof}

%--------------------------------------
\subsection{Boundary}
%--------------------------------------

\begin{secdefinition}[\cite{adams2003sobolev}, 4.9]\label{dlipbd}
  A bounded set $\Omega \subset \R^d$ for $d\geq 2$ has a {\em Lipschitz boundary} if there exist $\delta,M>0$, a locally finite open cover $\{U_j\}$ of $\Omega$, and, for each $j$, a function $f_j\colon \R^{d-1} \rightarrow \R$ such that the following hold.
  \begin{enumerate}[(a)]
    \item For some finite $R$, every collection of $R+1$ of the sets $U_j$ has empty intersection.
    \item Let $\partial \Omega$ denote the boundary of $\Omega$ and $\Omega_\delta = \{x \in \Omega\colon \norm{x-y} < \delta, \forall y \in \partial\Omega\}$. For every $x,y \in \Omega_\delta$ such that $\norm{x-y} < \delta$, there exists $j$ such that $x,y \in \{z \in U_j\colon \norm{z-w} < \delta, \forall z \in \partial U_j\}$.
    \item There exists $M>0$ such that each $f_j$ is Lipschitz with constant $M$.
    \item For some Cartesian coordinate system $(\zeta_{j,1}, \ldots, \zeta_{j,n})$ in $U_j$, $\Omega \cap U_j$ is represented by the inequality $\zeta_{j,n} < f_j(\zeta_{j,1}, \ldots, \zeta_{j,n-1})$.
  \end{enumerate}
\end{secdefinition}

\noindent This definition states that each point $x$ on the boundary of $\Omega$ has a neighborhood $U_x$ whose intersection with the boundary is the graph of a Lipschitz function.

\begin{secdefinition}[\cite{adams2003sobolev}, 4.6]\label{dcc}
  A bounded set $\Omega \subset \R^d$ for $d\geq 2$ satisfies the {\em cone condition} if there exists a finite cone (with height $\rho$, axis direction $v$, aperture angle $\kappa$, and vertex at the origin) 
  \begin{equation*}
    C \equiv C(\rho, v, \kappa) = \left\{ x \in \R^d \colon x=0 \text{ or } 0<\abs{x}<\rho, \angle(x,v) \leq \kappa/2 \right\},
  \end{equation*}

  \noindent where $\angle(x,v)$ is the angle between position vector $x$ and $v$, such that each $x \in \Omega$ is the vertex of a finite cone $C_x$ contained in $\Omega$ and congruent to $C$.
\end{secdefinition}

\begin{seclemma}\label{lballprob}
  Suppose $\text{supp}(f)$ has a Lipschitz boundary, and $\underline{f} = \inf_{x\in\text{supp}(f)} f(x) > 0$. Then there exists $c_0 \in (0,1]$ and $\rho > 0$ such that for any $x,x' \in \text{supp}(f)$ and $r,r' \geq 0$,
  \begin{equation*}
    \prob(\tilde{X}_1 \in B(x,r) \cap B(x',r')) \geq \underline{f}\, c_0 \text{Vol}(B(x,\min\{r,\rho\}) \cap B(x',\min\{r',\rho\})).
  \end{equation*}
\end{seclemma}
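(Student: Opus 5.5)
The plan is to factor the bound through a purely geometric statement and then establish that statement. Write $s=\min\{r,\rho\}$, $s'=\min\{r',\rho\}$ and $L=B(x,s)\cap B(x',s')$. Since $B(x,s)\cap B(x',s')\subseteq B(x,r)\cap B(x',r')$ and $f\geq \underline f$ on $\text{supp}(f)$, we have
\begin{equation*}
  \prob(\tilde X_1\in B(x,r)\cap B(x',r'))\geq \underline f\,\text{Vol}(L\cap\text{supp}(f)).
\end{equation*}
The lemma therefore reduces to a \emph{uniform relative density} claim: there are $c_0\in(0,1]$ and $\rho>0$ such that $\text{Vol}(L\cap \text{supp}(f))\geq c_0\text{Vol}(L)$ for all $x,x'\in\text{supp}(f)$ and all $s,s'\leq\rho$. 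If $L=\emptyset$ there is nothing to prove, so assume $L$ is nonempty. I will choose $\rho$ smaller than the Lebesgue number of the chart cover in Definition \ref{dlipbd}(b) and smaller than the cone height. Then $B(x,\rho)\cup B(x',\rho)$ sits inside a single chart $U_j$ whenever $L\neq\emptyset$, and in that chart $\text{supp}(f)\cap U_j=\{\zeta_d<f_j(\zeta')\}$ with $f_j$ $M$-Lipschitz. If the two balls lie entirely in the interior of the support, the claim holds trivially with constant $1$, so the work is near the boundary.

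\emph{Step 1 (lens contains a center).} Suppose $x\in L$, or symmetrically $x'\in L$. The cone condition of Definition \ref{dcc}, which follows from the Lipschitz boundary by 4.11 of \cite{adams2003sobolev} as in the proof of \autoref{lNm}, provides a congruent copy of $C(\rho,v,\kappa)$ with vertex $x$ inside $\text{supp}(f)$. In the chart this cone can be taken to be the downward cone $\{z: z_d-x_d<-M\norm{z'-x'}\}$. Because $L$ is convex, contains $x$, and has inradius comparable to its diameter in this regime, the intersection of $L$ with that cone has volume at least a fixed fraction of $\text{Vol}(L)$. I would prove this by a scaling argument: rescale $L$ to unit size, and use compactness over the parameters $(s/s',\norm{x-x'}/s)$ in the region where $x\in L$.

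\emph{Step 2 (general lens).} This is the case the lemma's wording requires and the one I expect to be the main obstacle. The lens $L$ need not contain $x$ or $x'$, and it can be very thin when $\norm{x-x'}$ is close to $s+s'$. I would use the graph structure directly. For every vertical fiber $\ell_{\zeta'}=\{(\zeta',t)\}$ meeting $L$, the set $L\cap\ell_{\zeta'}$ is an interval, and $\text{supp}(f)\cap\ell_{\zeta'}$ is the half-line $\{t<f_j(\zeta')\}$. Hence $\text{Vol}(L\cap\text{supp}(f))=\int \abs{L\cap\ell_{\zeta'}\cap\{t<f_j(\zeta')\}}\,d\zeta'$, and it suffices to show that a fixed fraction of the fibers, weighted by their lengths, have their lower part below the graph. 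Since $x,x'$ lie below or on the graph and $f_j$ is $M$-Lipschitz, the graph can drop below the chord $[x,x']$ by at most $M$ times the horizontal distance along the chord. So the portion of $L$ lying within the downward $M$-cones from the points of the segment $[x,x']$ should contain a fixed fraction of $L$. I would verify this first for lenses whose axis $x'-x$ is at least a fixed angle away from vertical, where the thin lens is transverse to the fibers. Then I would handle near-vertical axes separately: there one of $x,x'$ lies vertically below essentially all of $L$, and the cone at that point captures $L$ as in Step 1.

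If the fiber argument fails for extremely thin lenses, in particular near-horizontal lenses over a graph that dips, the fallback is to check whether the lemma is only invoked for lenses arising in \autoref{lNm} and Step 1 of the minimax proof, namely $x=x'$ or $\norm{x-x'}\in(\beta_n,2\beta_n)$ with equal radii. In those cases the lens has width comparable to its radius, and Steps 1–2 give the bound with an explicit $c_0$ depending only on $(M,\kappa,d)$.
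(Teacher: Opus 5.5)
\textbf{There is a genuine gap: the general lens bound you reduce to is false.} Your target is the uniform relative-density claim $\text{Vol}(L\cap\text{supp}(f))\geq c_0\text{Vol}(L)$ for every lens $L=B(x,s)\cap B(x',s')$ with $x,x'\in\text{supp}(f)$ and $s,s'\leq\rho$. This claim fails once $\text{supp}(f)$ has a reentrant corner, and so does the lemma as literally stated for $x\neq x'$. So neither Step 1, Step 2, nor the fallback can be completed.

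Take $d=2$ and $f$ uniform on $\text{supp}(f)=\{(a,b)\in[-1,1]^2\colon b\leq 2\abs{a}\}$. This is a square with a triangular notch removed, a polygon, hence Lipschitz. For small $u>0$ let $x=(-u,2u)$ and $x'=(u,2u)$, which lie on the two edges of the notch, and set $r=r'=4u/3$. Every $(a,b)\in B(x,r)\cap B(x',r')$ satisfies $\abs{a}\leq u/3$ and $(b-2u)^2\leq 16u^2/9-(\abs{a}+u)^2\leq 7u^2/9$. Hence $b\geq(2-\sqrt{7}/3)u>2u/3\geq 2\abs{a}$, so the whole lens lies in the open notch. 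The left side of the lemma is then $0$ while the right side is positive, for any $\rho$ and $c_0$ (take $u$ small).

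This example also defeats the other parts of your plan:
\begin{itemize}
\item \emph{Fallback.} Here the radii are equal and $\norm{x-x'}=1.5r$, so the lens is fat. Your claim that equal radii with $\norm{x-x'}\in(\beta_n,2\beta_n)$ give width comparable to the radius, and hence a bound, is wrong on both counts. The width $2\beta_n-\norm{x-x'}$ can vanish, and fatness does not help anyway: the obstruction is nonconvexity, not thinness.
\item \emph{Step 1.} Keep $x=(-u,2u)$ and put $x'=(5u/3,10u/3)$ on the opposite edge, reached from $x$ along the normal $(2,1)/\sqrt{5}$ pointing into the notch. Take $r'=\norm{x-x'}$ and $r=\epsilon u$. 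Then $x\in L\subseteq\{z\in B(x,\epsilon u)\colon (z-x)\cdot(2,1)\geq 0\}$, which lies in the closed notch. So every cone at $x$ inside the support meets $L$ in a null set, even though $L$ contains a center.
\item \emph{Step 2.} The chord $[x,x']$ can lie entirely above the graph of $f_j$. So the downward $M$-cones from its points need not lie in $\text{supp}(f)$.
\end{itemize}

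What survives, and all the paper actually uses (Step 1 of the proof of \autoref{thm:minimax_rate}), is the single-ball case $x=x'$. There the paper's cone argument is all you need, with no charts, compactness or fibers. The cone condition (4.11 of \cite{adams2003sobolev}) gives a copy of $C(\rho,v,\kappa)$ with vertex $x$ inside $\text{supp}(f)$. Truncating it at height $t=\min\{r,r',\rho\}$ gives a subset of $B(x,r)\cap B(x,r')\cap\text{supp}(f)$ of volume $c_0\text{Vol}(B(x,t))$, where $c_0$ is the fixed cone-to-ball volume ratio. Multiplying by $\underline{f}$ finishes.

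The paper's own ``Hence'' for $x\neq x'$ does not follow either, since the cone at $x$ need not meet $B(x',r')$. The right fix is therefore to restate the lemma for $x=x'$, rather than to look for a proof of the general lens bound.
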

\begin{proof}
  First,
  \begin{equation*}
    \prob(\tilde{X}_1 \in B(x,r) \cap B(x',r')) \geq \underline{f}\, \text{Vol}(B(x,r) \cap B(x',r') \cap \text{supp}(f)).
  \end{equation*}

  \noindent By 4.11 of \cite{adams2003sobolev}, if a set has a Lipschitz boundary, then it satisfies the cone condition for some finite cone $C(\rho, v, \kappa)$. Then for some rotation $\bm{R}_x\colon \R^d\rightarrow\R^d$, $B(x,r) \cap \text{supp}(f)$ contains the cone $x + \bm{R}_x(C(\min\{\rho,r\} v, \kappa))$ whose volume is a fixed fraction $c_0$ of $\text{Vol}(B(x,\min\{r,\rho\}))$ for $c_0 \in (0,1]$ independent of $x,r$. Hence $\text{Vol}(B(x,r) \cap B(x',r') \cap \text{supp}(f)) \geq c_0 \text{Vol}(B(x,\min\{r,\rho\}) \cap B(x',\min\{r',\rho\}))$.
\end{proof}

%----------------------------------------------------------------------

\FloatBarrier
\phantomsection
\addcontentsline{toc}{section}{References}
\bibliography{space_treats}{} 
\bibliographystyle{aer}

%----------------------------------------------------------------------

\end{document}